\newcommand{\set}[1]{\{ #1 \}}
\renewcommand{\phi}{\varphi}
\renewcommand{\epsilon}{\varepsilon}
\def\fakesmallcaps{\footnotesize}
\newcommand{\G}{\mathcal{G}}
\newcommand{\A}{\mathcal{A}}
\newcommand{\B}{\mathcal{B}}
\newcommand{\C}{\mathbb{C}}
\def\@seccntformat#1{\csname the#1\endcsname.\quad}
\theoremstyle{definition}
\newtheorem{defn}{Definition}
\newtheorem{prob}{Problem}
\theoremstyle{default}
\newtheorem{claim}{Claim}
\newtheorem{thm}{Theorem}
\newtheorem{lem}{Lemma}
\title{\Large { \sc 2048 is (PSPACE) Hard, but Sometimes Easy}}
\author{ Rahul Mehta\thanks{Department of Computer Science, 35 Olden St., Princeton University, Princeton, NJ, 08540.} \\ Princeton University \\ {\sf rahulmehta@princeton.edu}}
\date{\today}
\begin{document}

\maketitle
\newcommand{\moveset}{\set{\Uparrow, \Downarrow, \Leftarrow, \Rightarrow}}
\begin{abstract}
We prove that a variant of 2048, a popular online puzzle game, is PSPACE-Complete. Our hardness result
holds for a version of the problem where the player has oracle access to the computer player's moves.
Specifically, we show that for an $n \times n$ game board $\mathcal{G}$, computing a 
sequence of moves to reach a particular configuration $\C$ from an initial
configuration $\C_0$ is PSPACE-Complete.
Our reduction is from Nondeterministic Constraint Logic (NCL).

We also show that determining whether or not there exists a fixed sequence of moves $\mathcal{S} \in \moveset^k$ of length $k$
that results in a winning configuration for an $n \times n$ game board is fixed-parameter tractable (FPT). We describe an algorithm
to solve this problem in $O(4^k n^2)$ time.
\end{abstract}

\section{Introduction}
\nocite{garey79}
\nocite{hearn05}
\nocite{demaine02}
\nocite{savitch70}
\nocite{rob97}

The online video game 2048 \cite{gcweb13} (read as ``twenty-forty-eight") has recently gained a great deal of popularity. Played on
a $4 \times 4$ game board with tiles containing positive powers of 2, the goal of the game is to combine equally-valued tiles to reach
the 2048 tile.

More specifically, the game board begins with an initial configuration of two tiles, of value $2$ or $4$, placed at arbitrary locations on the grid.
The player then selects to move {\sf UP}, {\sf DOWN}, {\sf LEFT}, or {\sf RIGHT} (denoted as $\moveset$). Each move shifts {\it all} tiles on the
grid in the direction chosen. If two adjacent tiles have the same value (i.e. $2^i$ for some $i > 0$), they combine, and the single resulting tile after the combination
will have value $2^{i+1}$. Following the player's move, the computer places a tile of value $2$ or $4$ at a random free location on the board.
Figure \ref{fig:game} outlines the first six moves of a game of 2048. Observe that if a tile's row or column in the direction of the move $x \in \moveset$ is unobstructed, 
the tile will slide as far as possible on the grid.

\begin{figure}[h]
\begin{center}
\begin{tabular*}{0.65\textwidth}{c c c}
\includegraphics[width=3.2cm]{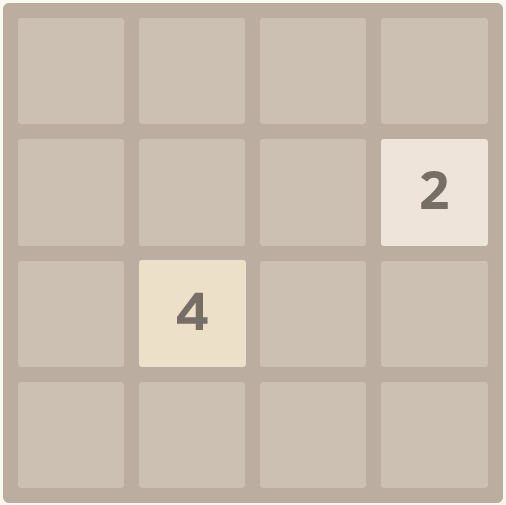} & \includegraphics[width=3.2cm]{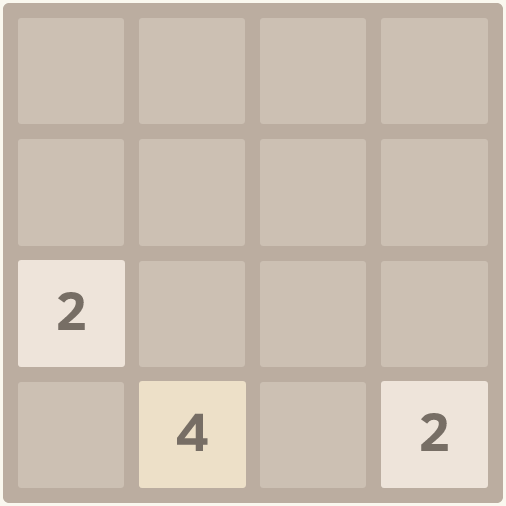} & \includegraphics[width=3.2cm]{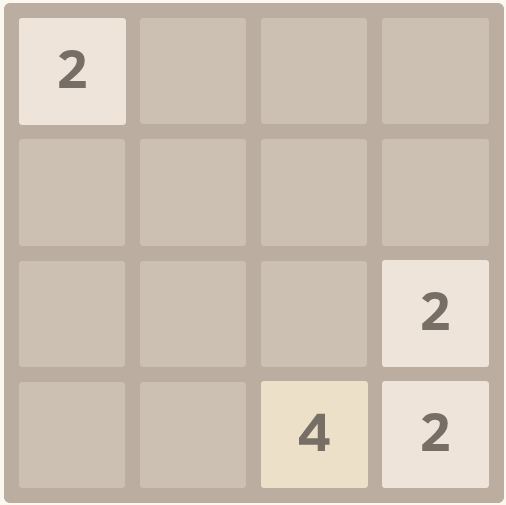} \\
{\sf Initial State} &  $\Downarrow$ & $\Rightarrow$  \\
\includegraphics[width=3.2cm]{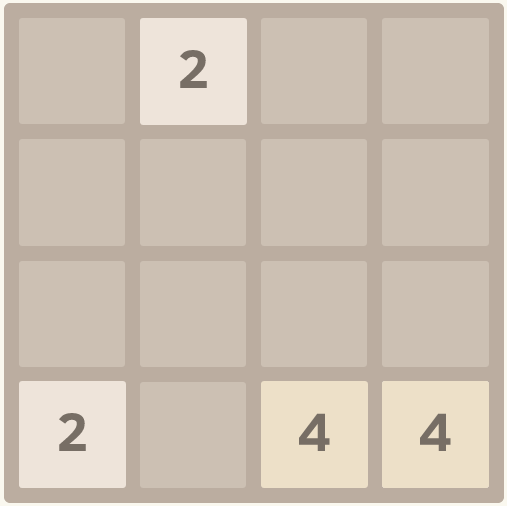} & \includegraphics[width=3.2cm]{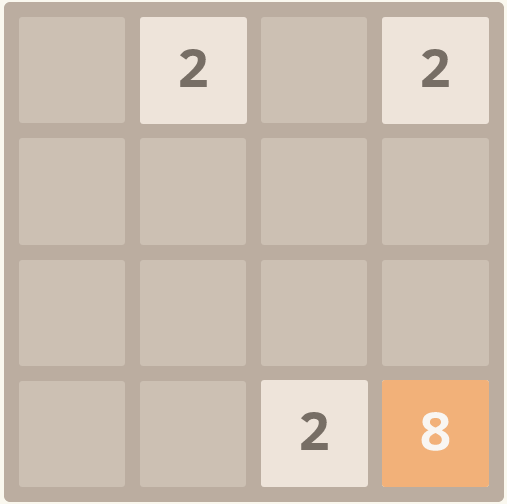} & \includegraphics[width=3.2cm]{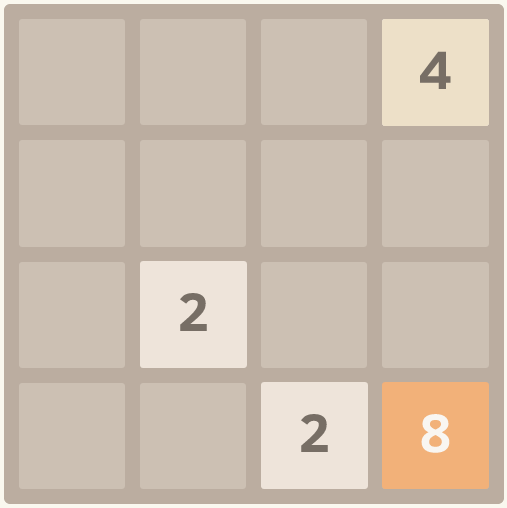} \\
$\Downarrow$ & $\Rightarrow$ & $\Rightarrow$
\end{tabular*}
\caption{The first six moves of a game of 2048.}\label{fig:game}
\end{center}
\end{figure}

Sliding tile games of a similar variety have been well-studied, both in the realm of puzzles as well as in algorithmic motion planning. Most recently,
Demaine and Hearn (\cite{hearn05, demaine02}) proved that several problems including solving sliding-block puzzles, Rush Hour, Sokoban, and Push-2-F
are all PSPACE-Complete. To accomplish this, they developed the Nondeterministic Constraint Logic (NCL) model of computation as a generic framework for PSPACE-Hardness
results (see \cite{hearn05}).

\subsection{Definitions}

We first define some basic concepts and terms that we will use throughout the paper, and then proceed to outline the two problems that will be studied in subsequent sections.
When we refer to a {\it game board} $\G$, we are referring to an $n \times n$ grid that can take on particular configurations, as described below;

\begin{defn}
Given a game board $\G$, a {\it configuration} of the board $\C:[n] \times [n] \rightarrow \set{2^\ell \mid \ell \in \mathbb{Z}_{> 0}} \cup \set{0}$ is a function that maps each of the $n^2$ locations on the game
board to a positive power of $2$, or $0$. For a particular configuration $\C$, a location $(i,j)$ is {\it empty} if $\C(i,j) = 0$.
\end{defn}

We describe locations in row-major order; for example, in the final (lower-right) configuration in Figure \ref{fig:game}, the tile of value 4 is located at $(1,4)$, the two tiles of value $2$ at $(3,2)$ and $(4,3)$, and the
tile of value $8$ at $(4,4)$.

We study two variants of 2048; first, we examine the complexity of computing a sequence of moves to reach a particular configuration $\C$ from an initial configuration $\C_0$, and second, 
we exhibit an efficient algorithm for determining whether a winning sequence of moves of length $k$ exists, starting from a particular initial configuration $\C_0$. In order to formalize these
two variants of the problem, we first recast 2048 as a two-player game between the human player $\A$ and the computer adversary $\B$;

\begin{defn}[The Game]
Given an initial configuration $\C_0$ of an $n \times n$ game board $\G$, consider the following game
between adversaries
$\A$ and $\B$, with a {\it goal configuration} $\C_f$. The game is played as follows;
\begin{compactenum}[\hspace{6mm}(1)]
	\item $\A$ makes a move $x \in \moveset.$ 
	\item $\B$ places one or more tiles of value $2^{\ell_1},\ldots,2^{\ell_k}$ ($\ell_1,\ldots,\ell_k \in \mathbb{Z}_{> 0}$) at locations $(i_1,j_1),\ldots,(i_k,j_k).$ 
\end{compactenum}
This game is played until one of two conditions occur; (1) the goal configuration $\C_f$ is reached, and $\A$ wins, or (2) there are no more moves that $\A$ can make (i.e. for any
$x \in \moveset$, applying $x$ will not change the configuration of the game board). In this case, $\B$ wins. Such a configuration is shown in Figure \ref{fig:over}
\end{defn}

\begin{figure}[h]
\begin{center}
\includegraphics[width=5cm]{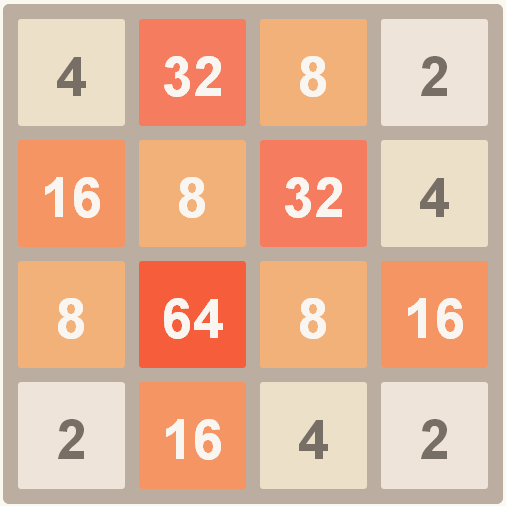}
\end{center}
\setcounter{figure}{1}
\caption{An example of a configuration satisfying (2) (game over).}\label{fig:over}
\end{figure}

\subsection{Problems}
We now introduce the two problems that will be studied in this paper.
Both variants that we study give player $\A$ {\it perfect knowledge}; that is, given a configuration $\C$ and a move $x \in \moveset$, $\A$ has oracle access to $\B$'s move if $x$ is applied to a board $\G$ with 
configuration $\C$.
This is denoted as $\B(\C,x)$. The response is a set of 3-tuples of the form $\set{(i_1,j_1,\ell_1),\ldots,(i_k,j_k,\ell_k)}$, where for index $t$, $(i_t,j_t)$ is the location of the added tile of value $2^{\ell_t}$. Accordingly,
an instance of {\sc 2048-Game} is uniquely determined by (1) the initial configuration $\C_0$, and (2) the responses of the computer $\B$.
In a sense, this is an easier version of the more general problem -- it is interesting to note that our hardness result still holds despite the additional information available to $\A$.

\begin{prob}[\sc 2048-Game]
Given an $n \times n$ game board $\mathcal{G}$ and an oracle $\B$ to the computer player, does there exist a sequence of moves $\mathcal{S} \in \moveset^*$ such that $\C$ is reached from $\C_0$?
\end{prob}

In Section \ref{sec:reduction}, we prove the following theorem:

\begin{claim}[Theorem \ref{thm:game_pspace}]
{\sc 2048-Game} is PSPACE-Complete.
\end{claim}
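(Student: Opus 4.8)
The plan is to establish PSPACE-completeness in the standard two-step fashion: first showing membership in $\PSPACE$, and then showing PSPACE-hardness by a reduction from Nondeterministic Constraint Logic (NCL), whose configuration-to-configuration reachability problem is known to be PSPACE-complete.

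For membership, I would first observe that because the game board $\G$ is $n \times n$, any configuration $\C$ is describable in $O(n^2 \log v)$ bits, where $v$ is the largest tile value that can appear; since values are powers of two on an $n^2$-cell board, $v$ is polynomially bounded, so each configuration fits in polynomial space. Crucially, because $\A$ has oracle access to $\B$, the response $\B(\C,x)$ to any move $x \in \moveset$ is determined, so the set of reachable configurations forms an implicit directed graph whose vertices each have polynomial size and whose out-edges are efficiently computable. Deciding whether $\C_f$ is reachable from $\C_0$ in this graph is then a reachability problem solvable nondeterministically in polynomial space---guess the next move, apply it together with $\B$'s forced response, and repeat---so the problem lies in $\mathbf{NPSPACE}$, and $\mathbf{NPSPACE} = \PSPACE$ by Savitch's theorem.

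For hardness, I would reduce from NCL. Given an NCL constraint graph, I would construct in polynomial time a board $\G$ together with an initial configuration $\C_0$, a goal configuration $\C_f$, and an oracle $\B$, so that legal move sequences in $\moveset^*$ correspond exactly to sequences of edge reversals in the NCL instance. The heart of the construction is a library of gadgets: an edge gadget whose internal tile arrangement encodes the current orientation (hence the direction of flow) of an NCL edge, together with vertex gadgets enforcing the AND and OR minimum-inflow constraints. Reversing an edge is simulated by a short canonical subsequence of moves that relocates or merges the tiles inside the edge gadget, and the vertex gadgets are designed so that this subsequence can be completed only when the corresponding reversal is legal. Reaching $\C_f$ would then correspond precisely to realizing the target NCL configuration.

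The main obstacle will be enforcing locality and preventing cheating moves. A single move in 2048 shifts \emph{every} tile on the board simultaneously, so I must ensure that the global sliding dynamics do not inadvertently disturb distant gadgets or allow an edge orientation to flip without paying its constraint cost. This requires carefully placed blocking tiles---and judicious use of $\B$'s forced placements---to isolate gadgets from one another and to guarantee that only the intended canonical subsequences make progress toward $\C_f$, while every other sequence is either reversible without gain or leads to a dead configuration. Verifying that the AND and OR gadgets faithfully reproduce the NCL inflow semantics under these global dynamics, and that the whole encoding remains polynomial-time computable, will be the technical core of the argument.
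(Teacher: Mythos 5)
Your membership argument is essentially the paper's: nondeterministically guess moves, apply $\B$'s programmed responses, and invoke Savitch's theorem to collapse $\mathbf{NPSPACE}$ to $\PSPACE$ (the paper additionally uses the monotonicity of the maximum tile, its Lemma \ref{lem:nondec}, to get an explicit termination condition; also note that the largest tile \emph{value} is exponential in $n$ --- it is $\log v$, not $v$, that is polynomially bounded --- though your space bound survives this correction). The hardness direction, however, has a genuine gap: it is a plan for a proof rather than a proof. In a reduction of this kind the gadgets \emph{are} the argument, and you exhibit none; you correctly identify one obstacle (a move in 2048 shifts every tile globally, which the paper solves with the rigid $(2,4)$ lattice and programmed oracle placements), but you miss the two obstacles that shape the paper's entire construction.

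First, tiles merge only with tiles of \emph{equal} value, so adjacent gadgets must be built from compatible powers of two; wiring an arbitrary planar constraint graph therefore requires reconciling values across connections. The paper solves this by four-coloring the planar NCL graph (via the algorithm of Robertson et al.) to assign each vertex a value class $j \in \set{4,5,6,7}$, and by introducing $k$-$k'$-{\sc Connection} gadgets that translate an output value $2^k$ into an input value $2^{k'}$ using a programmed placement by $\B$. Second, and more fundamentally, your claim that legal move sequences ``correspond exactly to sequences of edge reversals'' collides with the irreversibility of merging: by the paper's Lemma \ref{lem:nondec} tile values never decrease, so a merge-based gadget cannot be toggled back and forth, whereas an NCL solution may reverse the same edge many times. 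The paper confronts this directly --- its $j$-{\sc And} and $j$-{\sc Or} gadgets are explicitly one-shot, and it must introduce a separate Reversible $j$-{\sc And}/{\sc Or} gadget (with $\B$ replenishing tiles) to handle constraint-graph vertices whose edges need to flip in both directions. Without addressing value compatibility and irreversibility, the proposed correspondence cannot be realized as stated, so the reduction does not go through on the strength of what you have written.
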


%
%


We then examine the related problem of determining whether or not there exists a fixed-length sequence of moves that results in a winning configuration.

\begin{prob}[\sc 2048-$k$-Moves]
Given an $n \times n$ game board $\G$, an oracle $\B$ to the computer player, and a goal tile $2^m$ (for some $m > 0$), 
does there exist a sequence of moves $\mathcal{S} \in \moveset^k$ of length $k$ such that after $k$ moves, the board is in a winning configuration $\C_f$ (that is, $\C_f(i,j) = 2^m$, for some $i$ and $j$)?
\end{prob}

When the problem is paramaterized in this manner (limiting the number of moves to some constant $k > 0$), we show that in fact {\sc 2048-$k$-Moves} is fixed-parameter tractable (FPT).
In Section \ref{sec:game_fpt}, we prove the following theorem:

\begin{claim}[Theorem \ref{thm:game_fpt}]
Given an $n \times n$ game board $\mathcal{G}$, {\sc 2048-$k$-Moves} is solvable in $O(4^k n^2)$ time.
\end{claim}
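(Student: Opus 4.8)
The plan is to observe that oracle access to $\B$ collapses the apparent nondeterminism of the game, so that brute-force enumeration of all length-$k$ move sequences is both correct and efficient. Since $\B(\C,x)$ is specified as a function of the current configuration $\C$ and the chosen move $x$, applying a move to a fixed configuration yields a uniquely determined successor configuration (this is exactly what ``an instance is uniquely determined by $\C_0$ and the responses of $\B$'' amounts to). Consequently each sequence $\mathcal{S} \in \moveset^k$ determines a unique terminal configuration, and the collection of configurations reachable within $k$ moves from $\C_0$ is organized as a rooted tree of depth $k$ with branching factor $|\moveset| = 4$. This tree has $\sum_{i=0}^{k} 4^i = O(4^k)$ nodes and $O(4^k)$ edges.

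First I would describe how to simulate a single move $x \in \moveset$ on an $n\times n$ configuration in $O(n^2)$ time. For a horizontal move (say $\Leftarrow$), each of the $n$ rows is processed independently: compact the nonzero entries toward the chosen side, then scan once to merge each adjacent equal pair, taking care that any tile participates in at most one merge per move. Each row of length $n$ is handled in $O(n)$ time, so all $n$ rows cost $O(n^2)$; the vertical moves $\Uparrow,\Downarrow$ are symmetric on columns. Reading the oracle response $\B(\C,x)$ and writing the indicated tiles onto the board touches at most $n^2$ cells and so also costs $O(n^2)$, and a single $O(n^2)$ scan of the resulting board tests whether the goal tile $2^m$ is present.

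Next I would traverse this tree by depth-first search, maintaining the current configuration explicitly. At each node we try all four moves in turn; for each we copy the current board, apply the move and then the oracle placement, test for the goal tile, and recurse to depth $k$. (Copying the $O(n^2)$-cell board at each step, or equivalently undoing a move upon backtracking, is subsumed by the $O(n^2)$ cost already charged to the move.) If any node at any depth contains the tile $2^m$ we answer \emph{yes}; if the entire tree is exhausted without success we answer \emph{no}. Correctness is immediate from determinism: the tree enumerates precisely the configurations reachable by the $4^k$ distinct move sequences, and a winning sequence exists if and only if some node is a winning configuration.

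For the running time, there are $O(4^k)$ tree edges, and each edge carries one move simulation, one oracle placement, and one goal test, each $O(n^2)$; multiplying gives the claimed $O(4^k n^2)$. This problem is not technically deep, so there is no genuine obstacle—but the one point that requires care, and the only place the analysis could slip, is to avoid re-simulating each sequence from scratch: doing so would cost $O(kn^2)$ per leaf and inflate the bound to $O(4^k k n^2)$. The depth-first traversal sidesteps this by charging each move to a single tree edge and reusing the parent's configuration, keeping the per-edge cost at $O(n^2)$ and the total at $O(4^k n^2)$.
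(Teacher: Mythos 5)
Your proposal is correct and takes essentially the same approach as the paper: both enumerate the depth-$k$, $4$-ary game tree determined by the oracle's deterministic responses and spend $O(n^2)$ per configuration check, yielding $O(4^k n^2)$. If anything, your accounting is more careful than the paper's --- you explicitly cost the move simulation and note that re-simulating each sequence from scratch would inflate the bound to $O(4^k k n^2)$, a point the paper glosses over by charging only the leaf checks.
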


Before concluding the section, we show that the value of the maximum tile is monotonically nondecreasing. For a configuration $\C$, we denote the tile of maximum value as $\max(\C)$.

\begin{lem}\label{lem:nondec}
For a game board $\G$, a configuration $\C_i$, and the subsequent configuration $\C_{i+1}$,  $\max(\C_{i}) \leq \max(\C_{i+1})$.
\end{lem}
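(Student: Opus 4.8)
The plan is to decompose the single-round transition from $\C_i$ to $\C_{i+1}$ into its two constituent operations — player $\A$'s slide move $x \in \moveset$, and adversary $\B$'s subsequent tile placement — and to argue that neither operation can strictly decrease the value of the maximum tile. Writing $2^M = \max(\C_i)$, it suffices to exhibit, after each operation, a tile of value at least $2^M$.

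First I would analyze the slide move. A move $x$ repositions tiles along rows or columns but preserves each tile's value, with the sole exception that two \emph{equal} adjacent tiles of value $2^j$ merge into a single tile of value $2^{j+1}$. Fix a tile attaining the maximum value $2^M$ in $\C_i$. After the slide, this tile either survives unchanged with value $2^M$, or it merges; but a merge occurs only between equal tiles, so it must combine with another tile of value $2^M$, producing $2^{M+1} > 2^M$. In either case the post-move configuration contains a tile of value $\geq 2^M$, so the maximum cannot decrease under $\A$'s move.

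Next I would handle $\B$'s placement. By definition $\B$ only introduces new tiles of value $2^{\ell_t}$ with $\ell_t \geq 1$ at previously empty locations; it never removes or alters an existing tile. Consequently every tile present after $\A$'s move — in particular the tile of value $\geq 2^M$ guaranteed above — persists into $\C_{i+1}$. Chaining the two inequalities yields $\max(\C_{i+1}) \geq 2^M = \max(\C_i)$, as required.

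The argument is essentially a short case analysis, so there is no serious obstacle; the only point demanding care is the merge case, where one must invoke the rule that tiles combine \emph{only} when equal in order to conclude that the maximum tile can merge solely with another copy of itself (and hence strictly increase rather than decrease in value). I would also note for completeness that if the transition is interpreted as consisting of only $\A$'s move, then Step 1 alone suffices, while Step 2 covers the remainder of the full round.
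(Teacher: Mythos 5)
Your proof is correct and follows essentially the same argument as the paper: a case analysis showing the maximum tile either survives a move unchanged or merges with an equal-valued tile into something strictly larger. The only difference is that you additionally make explicit that $\B$'s placements never remove or alter existing tiles, a point the paper leaves implicit; this is a welcome bit of care but not a different approach.
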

\begin{proof}
Immediate from the combination rule of the game. When adjacent tiles of value $2^\ell$ combine, they form a single tile of value $2^{\ell+1}$. Consider two cases; (1) the maximum
tile does not change value, or (2) the maximum valued tile does change value. If (1) occurs, then we have equality, since $\max(\C_i)$ is left unchanged. For (2), let $\max(\C_i) = 2^\ell$, for some $\ell > 0$.
The only case in which the tile in question changes is if it combines with a tile of the same value. By our previous observation, two adjacent tiles of value $2^\ell$ will combine to form a single tile of value $2^{\ell+1}$.
Thus, we have $\max(\C_{i+1}) = 2^{\ell+1} > 2^\ell = \max(\C_i)$, which completes the proof.
\end{proof}

The remainder of the paper is organized as follows; Section \ref{sec:ncl} describes the Nondeterministic Constraint Logic (NCL) model of computation due to Demaine and Hearn \cite{hearn05}, and specifically, the framework
it provides for PSPACE-Hardness reductions. Section \ref{sec:reduction} describes in detail our reduction from NCL to {\sc 2048-Game}. Finally, Section \ref{sec:game_fpt} describes an efficient algorithm to solve
{\sc 2048-$k$-Moves}, which establishes that it is fixed-parameter tractable (FPT).

\section{Nondeterministic Constraint Logic (NCL)}\label{sec:ncl}
We will now outline the Nondeterministic Constraint Logic (NCL) model of computation, which was developed by Demaine and Hearn \cite{hearn05} as a general framework for proving PSPACE-Completeness results. 
All of their PSPACE-Completeness results are reductions from Quantified Boolean Formula (QBF), a well-known PSPACE-Complete problem \cite{garey79}.

An NCL {\it machine}
consists of a {\it constraint graph}, which is an arbitrary undirected graph $G = (V,E)$ with edge weights $w:E \rightarrow \set{1,2}$ (all edges have weight 1 or 2). A {\it configuration} of the machine is an orientation
of the edges of $G$. A configuration is {\it valid} if for each vertex $v \in V$, the sum of incoming edge weights is at least $2$. A move is made by reversing a single edge in the network such that the configuration remains
valid.

A natural question to ask is whether or not a particular edge can be reversed after a sequence of moves ({\sc Config-to-Edge}), or alternately, whether there exists a sequence of moves to reach a particular configuration from some initial configuration
({\sc Config-to-Config}).
Both of these problems are PSPACE-Complete (see \cite{hearn05}). This hardness result still holds when we restrict the vertices to those with incident edge weights 1,1, and 2, and those with 1, 1, and 1. These vertices
are illustrated in Figure \ref{fig:ncl_orig} (taken from \cite{hearn05}):

\begin{figure}[h]
\begin{center}
\begin{minipage}[b]{.25\textwidth}
\centering
\includegraphics[width=2.5cm]{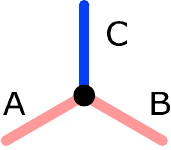}
\subcaption{{\sc And}}
\end{minipage}%
\begin{minipage}[b]{.25\textwidth}
\centering
\includegraphics[width=2.5cm]{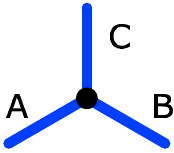}
\subcaption{{\sc Or}}
\end{minipage}
\end{center}
\caption{NCL {\sc And} and {\sc Or} Vertices}\label{fig:ncl_orig}
\end{figure}

Red edges have weight 1, and blue edges have weight 2. For an orientation of the edges, observe that for vertex (a), the blue edge can point out if and only if the two red edges are pointing in (to satisfy the previously-mentioned constraint
that the in-flow on all vertices must be at least 2). Similarly, for (b), the top blue edge can point outward if and only if one of the bottom edges is pointing in. Thus, we can clearly see that (a) functions as an {\sc And} gate of sorts
and (b) as an {\sc Or} gate.

Demaine and Hearn in fact strengthen this result even further, and show that {\sc Config-to-Edge} and {\sc Config-to-Config} both remain PSPACE-Complete when the constraint graph $G$ is planar. Thus, proving a game to be PSPACE-Hard
reduces to simply constructing NCL {\sc And} and {\sc Or} gadgets from the problem in question, and then demonstrating how to use them to construct arbitrary planar constraint graphs.

\section{The Reduction}\label{sec:reduction}

In this section, we outline the reduction from planar NCL to {\sc 2048-Game}. We begin by constructing NCL {\sc And} and {\sc Or} vertex gadgets with $4 \times 4$ subinstances of {\sc 2048-Game}, and then demonstrate how to connect them
together to make arbitrary planar constraint graphs. This section will prove the following theorem;

\begin{thm}\label{thm:game_pspace}
{\sc 2048-Game} is PSPACE-Complete.
\end{thm}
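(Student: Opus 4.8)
The plan is to establish both membership and hardness. For $\PSPACE$ membership, the key observation is that a configuration is cheap to store: by Lemma~\ref{lem:nondec} the maximum tile value is monotonically nondecreasing along any play, so on any move sequence that reaches $\C_f$ every intermediate configuration has all entries bounded by $\max(\C_f)$, which is an input quantity. Hence each reachable configuration is describable in $O(n^2 \log \max(\C_f))$ bits, polynomial in the input. Because $\A$ has oracle access to $\B$, a fixed sequence $\mathcal{S} \in \moveset^*$ determines a unique play, so {\sc 2048-Game} is precisely reachability of $\C_f$ from $\C_0$ in the (exponentially large, implicitly defined) configuration graph of out-degree $4$. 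I would solve this with a nondeterministic procedure that stores only the current configuration together with a binary step counter (bounded by the number of distinct configurations), repeatedly guesses a move $x \in \moveset$, applies the slide-and-combine rule and $\B(\C,x)$, and accepts upon reaching $\C_f$. This runs in polynomial space, and Savitch's theorem ($\mathbf{NPSPACE} = \PSPACE$) then places the problem in $\PSPACE$.

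For hardness I would reduce from planar NCL {\sc Config-to-Config}, shown $\PSPACE$-complete in Section~\ref{sec:ncl}. The idea is to encode an edge orientation by the position of a distinguished tile inside a 2048 \emph{wire} sub-instance, so that a single move in $\moveset$ slides that tile from one end to the other, modelling the reversal of that edge. I would then realize the two required NCL vertices as $4 \times 4$ gadgets: an {\sc And} gadget for the vertex with incident weights $1,1,2$ and an {\sc Or} gadget for the vertex with weights $1,1,1$, where the slide-and-combine geometry permits the outgoing tile to move only when the incoming tiles encode an in-flow of at least $2$. Composing these gadgets along a planar embedding of the constraint graph (which exists, since the NCL hardness holds for planar $G$) yields a 2048 instance in which the initial NCL orientation maps to $\C_0$ and the target orientation to $\C_f$, so that a valid NCL reversal sequence exists if and only if some $\mathcal{S} \in \moveset^*$ reaches $\C_f$.

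The crucial enabling device is control of $\B$. Since an instance of {\sc 2048-Game} is fixed entirely by $\C_0$ together with $\B$'s responses, the reduction gets to specify $\B$ completely; I would use $\B$ not as an adversary but as a cooperative replenisher that, after each slide or merge, restores exactly the tiles consumed, so that a wire can be driven in either direction despite the inherent irreversibility of merging. Because $\A$ additionally has oracle access to $\B$, the player can read each placement and keep the encoding synchronized across gadgets. This is what lets irreversible 2048 dynamics faithfully simulate the reversible edge flips of NCL.

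The hard part will be arguing that the gadgets are simultaneously \emph{complete} and \emph{sound}: every valid NCL move must be realizable by a legal move sequence, and, conversely, no sequence in $\moveset^*$ may cheat the in-flow constraint, leak a signal between gadgets other than through the intended wire semantics, or exploit a slide to reach $\C_f$ without a corresponding valid NCL play. The $4 \times 4$ size budget is tight, so the delicate point is ruling out unintended interactions — in particular, showing that a merge forced in one gadget cannot irrevocably destroy a configuration needed elsewhere. I expect to close this by maintaining a global invariant on the ``rest'' configuration between moves and by carefully choosing $\B$'s responses so that the only reachable states are the intended encodings of NCL configurations.
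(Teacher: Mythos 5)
Your \PSPACE-membership argument is essentially the paper's Lemma~\ref{lem:game_in_pspace}: store only the current configuration, nondeterministically guess moves, use the monotonicity of the maximum tile (Lemma~\ref{lem:nondec}) to cut off any branch whose maximum tile exceeds $\max(\C_f)$, and invoke Savitch's theorem. Your explicit step counter is, if anything, a cleaner termination guarantee than the paper's three stopping conditions, so this half is fine.

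The hardness half, however, has a genuine gap. You name the correct source problem (planar NCL {\sc Config-to-Config}) and the correct enabling device (the reduction gets to program $\B$), but the entire substance of the paper's proof --- the explicit gadgets and the argument that they compose without unintended interaction --- is deferred (``I would realize\ldots'', ``I expect to close this by\ldots''), and the one concrete mechanism you commit to does not work under the rules of 2048. You propose to encode an edge orientation by the \emph{position} of a distinguished tile in a wire, flipped by a single move; but a move in 2048 shifts \emph{all} tiles on the board simultaneously, so you cannot slide one tile in one wire without simultaneously sliding every other position-encoded tile on the board, and a single move would then ``reverse'' many NCL edges at once, destroying soundness. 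The paper avoids exactly this by encoding signals in tile \emph{values} rather than positions: every gadget is embedded in a rigid $(2,4)$ lattice in which no two adjacent tiles are equal, so no move changes anything anywhere except where a merge has been deliberately enabled, and activation propagates through chains of merges inside the $j$-{\sc And} and $j$-{\sc Or} gadgets. That design choice forces two further pieces of machinery absent from your plan: since connectors carry specific powers of $2$, neighboring gadgets must use distinct parameters $j$, which the paper arranges by four-coloring the planar constraint graph and joining gadgets of different colors with $k$-$k'$-{\sc Connection}, $k$-{\sc Line}, and $k$-{\sc Corner} pieces; and since merges are irreversible, the paper introduces a dedicated Reversible $j$-{\sc And}/{\sc Or} gadget for the initial configurations that must be undone. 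Your blanket fix --- that $\B$ ``restores exactly the tiles consumed'' so wires run both ways --- is not a legal use of $\B$: when two $2^j$ tiles merge, $\B$ may place a new tile in the vacated square, but it cannot turn the resulting $2^{j+1}$ tile back into $2^j$, so the pre-merge state is unrecoverable by $\B$ alone. As it stands, your reduction is a statement of intent whose central mechanism would have to be replaced by something like the paper's value-based, lattice-rigidified construction.
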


Before outlining the reduction, however, we must show that {\sc 2048-Game} is in fact contained in PSPACE. A rigorous proof of this fact is given below;

\begin{lem}\label{lem:game_in_pspace}
$\text{\sc 2048-Game} \in \text{PSPACE}$.
\end{lem}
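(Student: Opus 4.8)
The plan is to recast {\sc 2048-Game} as a reachability problem on the (exponentially large but implicitly defined) directed graph whose vertices are board configurations and whose edges connect a configuration $\C'$ to the configuration obtained by applying some move $x \in \moveset$ followed by the oracle response $\B(\C',x)$. Since the oracle makes the dynamics deterministic, each configuration has at most four out-edges, and an instance is a YES-instance exactly when the target configuration $\C$ is reachable from $\C_0$ in this graph. To place the problem in $\PSPACE$ I would exhibit a nondeterministic polynomial-space algorithm and then invoke Savitch's theorem \cite{savitch70} ($\mathbf{NPSPACE} = \PSPACE$). Concretely, the algorithm guesses the moves of $\A$ one at a time, always storing only the current configuration together with a step counter, querying $\B$ at each step to update the board, and accepting the moment it reaches $\C$.

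For this to run in polynomial space I must verify two things: that a single configuration occupies only polynomially many bits, and that the step counter needs only polynomially many bits. The first point is where Lemma \ref{lem:nondec} does the real work. Because $\max(\cdot)$ is monotonically nondecreasing along the play, every configuration lying on a path from $\C_0$ to $\C$ satisfies $\max(\C_0) \le \cdots \le \max(\C)$; in particular no en-route configuration can contain a tile exceeding $\max(\C) = 2^{\ell^\ast}$. Moreover, any guessed move that would create a tile larger than $\max(\C)$ may be pruned, since monotonicity then forbids ever returning to a board whose maximum equals $\max(\C)$. Hence every relevant configuration assigns to each of its $n^2$ cells a value in $\set{0,2^1,\ldots,2^{\ell^\ast}}$, which is storable in $O(\log \ell^\ast)$ bits per cell and $O(n^2 \log \ell^\ast)$ bits in total. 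As $\ell^\ast$ is encoded explicitly in the input configuration $\C$, the quantity $\log \ell^\ast$ is at most the input length, so a configuration fits in polynomial space.

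For the counter, the number of distinct configurations with all cell-exponents bounded by $\ell^\ast$ is $N = (\ell^\ast + 1)^{n^2}$, so $\log N = O(n^2 \log \ell^\ast)$ is polynomial in the input size. If $\C$ is reachable at all then it is reachable along a simple path, hence within $N$ steps; the algorithm therefore caps its counter at $N$ and rejects a branch that runs longer without reaching $\C$. Storing the current configuration, the counter, and the (polynomially bounded) oracle response $\B(\C',x)$---which I assume, as in the reduction to follow, is computable in polynomial space---keeps the whole computation within polynomial space. This yields a nondeterministic polynomial-space decision procedure, and Savitch's theorem then upgrades it to a deterministic one, establishing {\sc 2048-Game} $\in \PSPACE$.

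I expect the main obstacle to be precisely the configuration-size bound: a priori the maximum tile value, and hence the number of bits needed to write down a configuration, could grow without bound as the number of moves grows, so the whole argument hinges on using Lemma \ref{lem:nondec} to pin every en-route configuration underneath $\max(\C)$. A secondary point to handle carefully is the status of the oracle $\B$: the space bound is only meaningful if each response has polynomial size and is computable within polynomial space, which must be recorded as a standing assumption (and is satisfied by the explicit, simple $\B$ used in Section \ref{sec:reduction}). An alternative to the guess-and-Savitch route would be to implement Savitch's reachability recursion directly---defining a predicate $\mathrm{REACH}(\C', \C'', i)$ that tests for a path of length at most $2^i$---but the nondeterministic formulation is cleaner to state.
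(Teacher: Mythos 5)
Your proposal takes essentially the same route as the paper's own proof: a nondeterministic polynomial-space algorithm that guesses $\A$'s moves one at a time, prunes any branch whose maximum tile exceeds $\max(\C_f)$ by appeal to Lemma~\ref{lem:nondec}, and then invokes Savitch's theorem to conclude membership in $\PSPACE$. If anything, your version is more careful than the paper's---the explicit step counter capped at the number of distinct bounded configurations supplies the termination guarantee that the paper asserts but does not actually justify (its three halting conditions alone do not rule out nonterminating cycling branches), and you make explicit the standing assumption that oracle responses are polynomial-size and polynomial-space computable.
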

\begin{proof}
We begin by giving an NPSPACE algorithm to decide {\sc 2048-Game}. For an $n \times n$ game board $\G$, we clearly can
represent the current configuration of the game board in polynomial space.
 Starting with an initial configuration $\C_0$ and a goal configuration $\C_f$, we can nondeterministically select a move $x \in \moveset$ at each step (without
consulting the oracle to $\B$), and update the previous configuration with the current one. We repeat the above procedure until one of the following conditions occurs;
\begin{enumerate}[\hspace{5.5mm}(1)]
	\item The goal configuration $\C_f$ is reached, in which case output {\sf YES}.
	\item There are no possible moves from the current configuration $\C$, and it is not the goal configuration (game over); output {\sf NO}.
	\item For the current configuration $\C$, $\max(\C) > \max(\C_f)$, in which case output {\sf NO}.
\end{enumerate}

Checking for these three conditions guarantees that the algorithm will terminate. Specifically, for condition (3), if there is a tile of value greater than the maximum-valued tile in the goal
configuration $\C_f$, then $\C_f$ is unreachable from the current configuration $\C$. This is immediate from Lemma \ref{lem:nondec}.
Thus, we have an NPSPACE algorithm for deciding {\sc 2048-Game}. This is easily converted to a PSPACE algorithm due to Savitch's Theorem \cite{savitch70}, completing our proof.
\end{proof}

\subsection{NCL A{\fakesmallcaps ND} and O{\fakesmallcaps R} Gadgets}

In order to prove Theorem \ref{thm:game_pspace}, we first construct NCL {\sc And} and {\sc Or} gadgets from small instances of {\sc 2048-Game}, and then show how to connect them into arbitrary planar
graphs. 

\begin{figure}
\begin{center}
\begin{minipage}[b]{.3\textwidth}
\centering
\includegraphics[width=4cm]{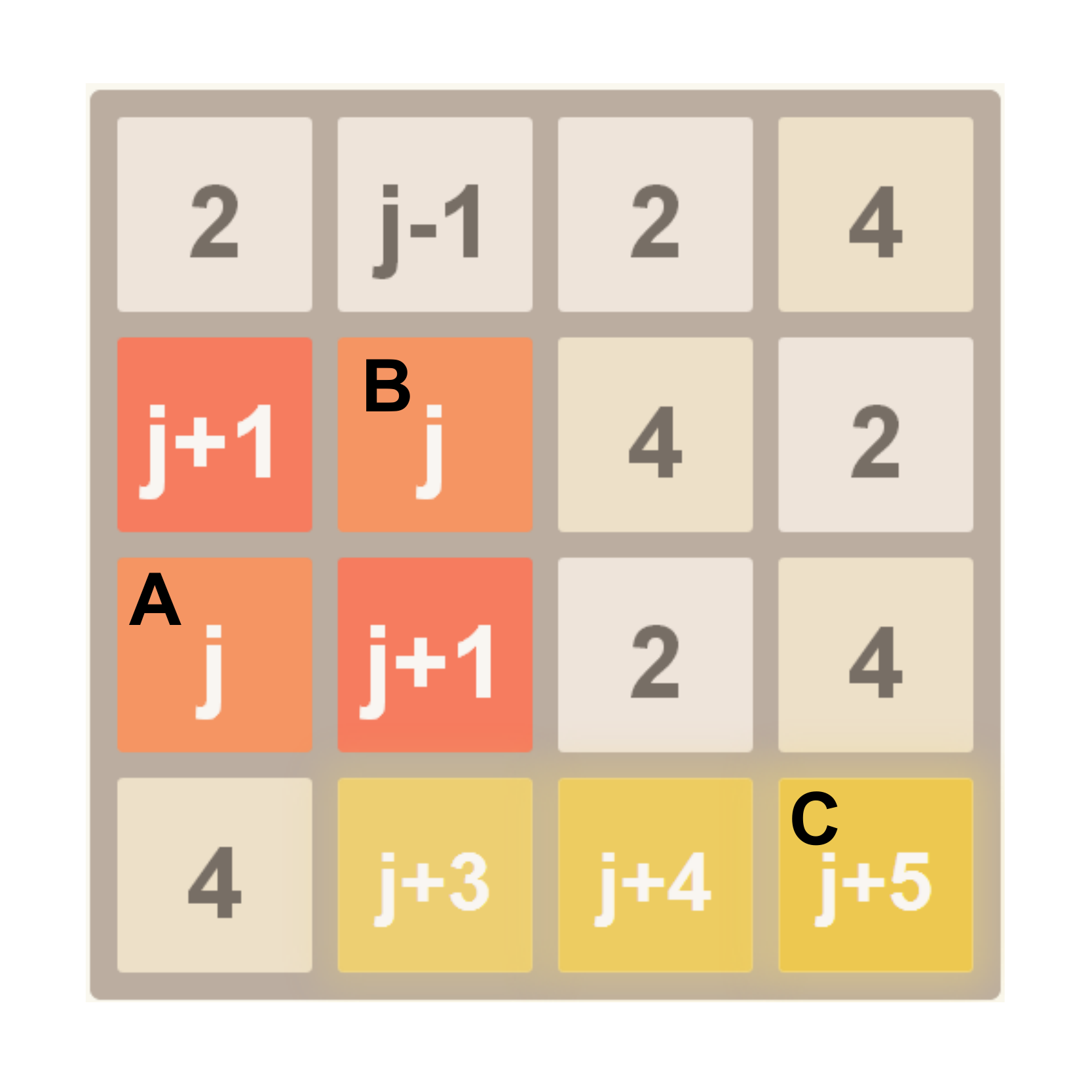}
\subcaption{$j$-{\sc And} Gadget}
\end{minipage}%
\begin{minipage}[b]{.3\textwidth}
\centering
\includegraphics[width=4.1cm]{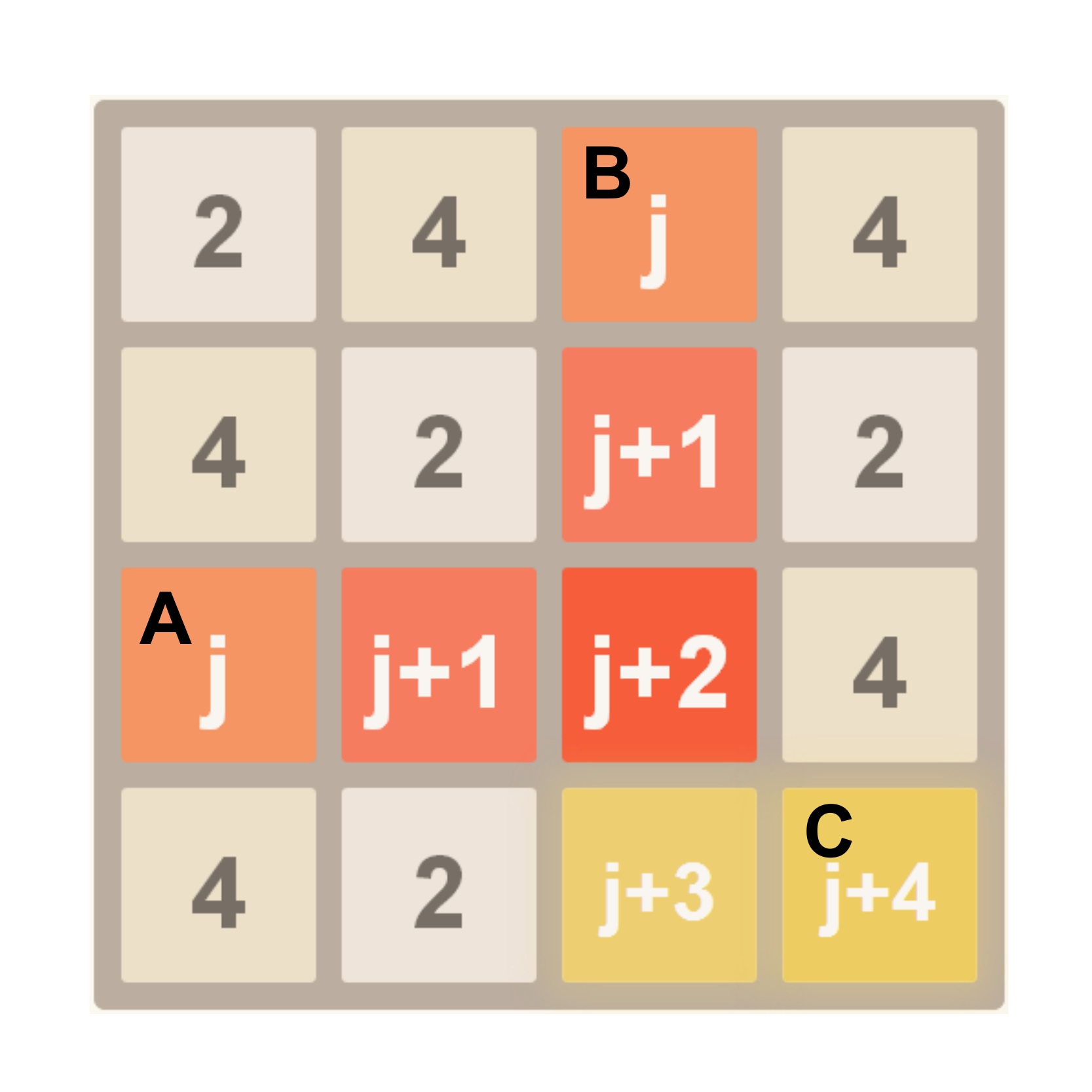}
\subcaption{$j$-{\sc Or} Gadget}
\end{minipage}
\begin{minipage}[b]{.3\textwidth}
\centering
\includegraphics[width=3.7cm]{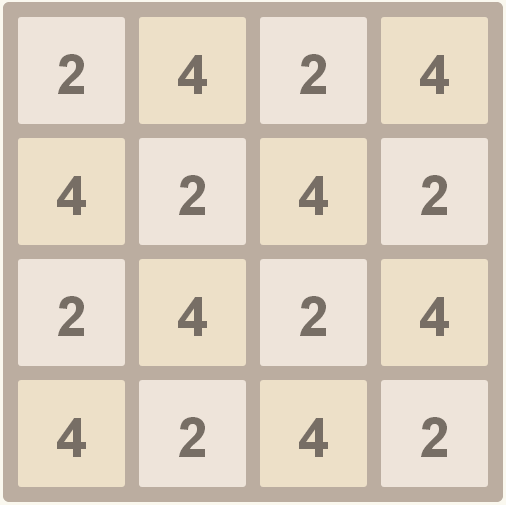}
\subcaption{$(2,4)$ Lattice}
\end{minipage}
\end{center}
\caption{NCL {\sc And} and {\sc Or} Vertices. Tiles with labels $j$, $j+1$, etc. correspond to values $2^j, 2^{j+1},$ etc. In both gadgets, connections {\sf A} and {\sf B} are facing {\it out}, while connection {\sf C} is facing {\it in}.}\label{fig:ncl_2048}
\end{figure}

Figure \ref{fig:ncl_2048} shows two examples of {\sc And} and {\sc Or} vertex gadgets. Since the tiles in 2048 are assigned a numerical value, an added difficulty presents itself when connecting vertex gadgets together (since the gadgets
rely on adjacent tiles containing specific powers of 2, including the connectors to ``activate" each vertex gadget). Thus, we introduce the notion of a $j$-{\sc Or} and $j$-{\sc And} gadget, as is shown in Figure \ref{fig:ncl_2048}. The labels
{\sf A, B,} and {\sf C} refer to the connection points for each gadget; these labels correspond to the edges for the NCL vertices in Figure \ref{fig:ncl_orig}.

A connection {\sf A} or {\sf B} is considered to be {\it activated} for a $j$-{\sc And} or $j$-{\sc Or} gadget if the tile has value not equal to $2^j$ --  this directly corresponds to reversing an edge in an NCL constraint graph. For both {\sc And} and {\sc Or} gadgets, {\sf A} and {\sf B} are facing {\it in} if the corresponding tile is activated, and {\it out} otherwise.

The tile labeled {\sf C} is also considered to be activated if its value has increased; thus, for a $j$-{\sc And} gadget, {\sf C} is activated if the tile a value not equal to $2^{j+5}$ or greater, and for a $j$-{\sc Or} gadget, if the tile has not equal to $2^{j+4}$. 
Accordingly, {\sf C} is facing {\it out} if it is activated, and {\it in} otherwise.

Before proving the correctness of the two gadgets, we outline the use for the gadget in Figure \ref{fig:ncl_2048}(c); the $(2,4)$ 
Lattice. This structure has an important property; for any move $x \in \moveset$, the configuration remains unchanged. Thus,
it is perfectly rigid. We will embed all of the other gadgets and connection pieces into the $(2,4)$ lattice so as to prevent
large portions of the game board shifting when a move is made by $\A$.

Additionally, whenever a square of the game board is uncovered by a move, the oracle $\B$, in this construction, will respond by placing a tile $t \in \set{2,4}$ in the vacant location,
depending on which tile correctly continues the $(2,4)$ lattice pattern. This is where the {\it perfect knowledge} comes into play; we are allowed to ``program" player $\B$'s responses
to various moves, which greatly simplifies parts of the reduction.



\begin{lem}\label{lem:jand_correct}
For a $j$-{\sc And} vertex gadget, {\sf C} can face {\it out} if and only if {\sf A} and {\sf B} are facing {\it in}.
\end{lem}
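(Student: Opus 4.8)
The plan is to read the biconditional as the statement that the gadget implements NCL {\sc And} semantics: the weight-$2$ edge {\sf C} (blue) may point out exactly when both weight-$1$ edges {\sf A} and {\sf B} (red) point in. Translating through the activation conventions, I must show that the {\sf C} tile can be driven to value $2^{j+5}$ or greater if and only if both the {\sf A} and {\sf B} tiles carry values other than $2^j$. I would prove the two implications separately, in each case arguing directly from the merge rule of $2048$ (two adjacent $2^\ell$ tiles fuse into one $2^{\ell+1}$) together with the fact that, by construction, the oracle $\B$ only ever refills an uncovered cell with the tile that continues the surrounding $(2,4)$ lattice. The rigid lattice frame then confines all relevant motion to the interior of the $4 \times 4$ subinstance, which is what makes the analysis finite.

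For the forward direction I would exhibit an explicit witnessing sequence. Assuming {\sf A} and {\sf B} are activated, both connectors feed higher-valued tiles into the interior; I would track, move by move, how the {\sf A}-tile and {\sf B}-tile are slid together and merged with the gadget's fixed internal tiles, producing a cascade of doublings that terminates in a single tile of value $2^{j+5}$ at the {\sf C} cell. Because each merge exactly doubles a value and the lattice prevents spurious shifts elsewhere, it suffices to verify a short, fixed-length list of moves for the $4 \times 4$ block, a finite check independent of $n$ and $j$.

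The harder direction, and the main obstacle, is the converse: if at least one of {\sf A}, {\sf B} is \emph{not} activated (points out, i.e.\ still holds value $2^j$), then no sequence of moves can raise {\sf C} to $2^{j+5}$. Here I cannot simply exhibit one favorable play; I must rule out \emph{all} plays. I would argue by a conservation invariant: merging preserves the total tile value on the board, and by Lemma \ref{lem:nondec} the maximum tile value is monotonically nondecreasing, so forming a $2^{j+5}$ tile requires assembling a balanced binary tree of $2^5$ tiles of value $2^j$ (or the appropriate higher-value equivalents). I would then show that the tiles reachable inside the gadget when one connector remains at $2^j$ sum to strictly less than $2^{j+5}$ in the component that can reach the {\sf C} cell, so the cascade stalls at $2^{j+4}$ or below; equivalently, the missing connector removes exactly the one summand needed to complete the final doubling. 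Combined with the lattice rigidity, which forbids importing value from outside the block, this bounds $\max$ over the {\sf C}-reachable region strictly below $2^{j+5}$, forcing {\sf C} to point in.

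Finally, I would stress that both directions must be checked against every $x \in \moveset$ rather than a single move: the $(2,4)$ lattice is precisely the device that makes this tractable, since it freezes the boundary and reduces the argument to the constant-size interior, and the programmed responses of $\B$ eliminate any nondeterminism in how vacated cells are refilled. I expect the backward invariant to carry the weight of the proof, with the forward move sequence being a routine, if tedious, verification.
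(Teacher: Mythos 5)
Your high-level decomposition (forward direction by an explicit witnessing sequence, converse by an impossibility argument confined to the $4\times 4$ block) matches the paper's structure: the paper likewise delegates the forward direction to an explicit activation sequence (Figure \ref{fig:and_path} in Appendix \ref{app:seq}) and spends the entire proof on the converse. The problem is that your converse rests on a conservation invariant that, as stated, is false in this game. After every move by $\A$, the oracle $\B$ \emph{places one or more new tiles} --- this is part of the definition of the game, and in this construction $\B$ refills every vacated cell with a $2$ or $4$ continuing the lattice pattern. So the total tile value inside the block is not conserved; it grows with every move, and since $4+4=8$, $8+8=16$, and so on, injected tiles are not a priori barred from cascading up to values $\geq 2^j$ (the construction uses $j \in \{4,5,6,7\}$). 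To make your invariant sound you would have to prove separately that the refill tiles can never merge with each other or with displaced gadget tiles --- a claim requiring its own induction on the evolving mixture inside the block --- and you would also have to account for value entering through the connector cells themselves, since that is exactly the mechanism by which {\sf A} and {\sf B} become activated. Finally, you appeal to ``the component that can reach the {\sf C} cell,'' but in 2048 every move shifts \emph{all} tiles, so there is no static reachable component; delimiting one is essentially the work the proof has to do, and your proposal does not say how.

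The paper's converse avoids value counting entirely: it is a backward necessary-condition chain on specific cells. It first observes that nothing in the gadget can move unless {\sf A} or {\sf B} is activated (no two adjacent tiles are equal), then argues inductively backward from the activation of {\sf C} at configuration $\C_i$ that $\C_{i-3}(3,2) = 2^{j+3}$ is forced, and finally shows that with only one connector activated (WLOG {\sf A}) the best attainable state has $\C'(3,1)=2^{j+2}$ adjacent to $\C'(3,2)=2^{j+1}$, a value mismatch from which $2^{j+3}$ at $(3,2)$ can never be produced. This pins the obstruction to mergeability at particular locations rather than to a global value budget. Your counting idea could plausibly be repaired (the gadget's values are tight enough that losing the single $2^j$ summand stalls the cascade), but as proposed the invariant carrying all the weight of your argument does not hold, so the key step fails.
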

\begin{proof}
First, we observe that the tiles in the $j$-{\sc And} gadget will not shift at all unless either {\sf A} or {\sf B} is activated, for any move $x \in \moveset$, since there are no adjacent tiles of the same value.

Next, we demonstrate that for any sequence of moves which results in {\sf C} activating, {\sf A} and {\sf B} must both be activated. We claim that for {\sf C} to be activated in configuration $\C_i$,
 $\C_{i-3}(3,2) = 2^{j+3}$. That is, location $(3,2)$ must contain $2^{j+3}$ before {\sf C} can be activated. This is summarized in the diagram below; \\

\begin{center}
 \includegraphics{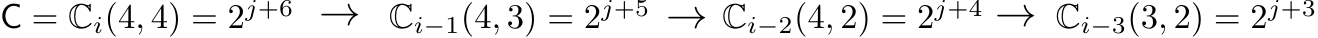}
\end{center}
 
 Correctness of the diagram can be shown inductively, by assuming that the value in $\C_{k-1}$ is necessary for the specified value in $\C_k$ to appear. Thus, proving the lemma reduces to determining
 the conditions under which there is a $\C$ such that $\C(3,2) = 2^{j+3}$. 
 
 Since we have already shown that {\sf A} or {\sf B} must be activated for any movement within the gadget to occur, it suffices to prove that only activating {\sf A} or {\sf B} does not suffice. Without loss of
 generality, assume that {\sf A} is activated but not {\sf B} (the proof of the opposite assumption follows the exact same reasoning). If only {\sf A} is activated, then for some $\C$, $\C(3,1) = 2^{j+1}$, and
 there exists a sequence of moves such that for some subsequent $\C'$, $\C'(3,1) = 2^{j+2}$. However, $\C'(3,2) = 2^{j+1}$, so no combination is possible to achieve $\C''(3,2) = 2^{j+3}$, as is desired.
 Thus, {\sf A} and {\sf B} must be activated for {\sf C} to be activated, which concludes the proof.

%
\end{proof}

An example of a sequence of moves to activate {\sf C} for a $4$-{\sc And} vertex gadget is in Appendix \ref{app:seq} (Figure \ref{fig:and_path}).
Now, we turn our attention to the $j$-{\sc Or} vertex. The proof closely follows that of Lemma \ref{lem:jand_correct}.

\begin{lem}\label{lem:jor_correct}
For a $j$-{\sc Or} vertex gadget, {\sf C} can face out if and only if {\sf A} or {\sf B} are facing {\it in}.
\end{lem}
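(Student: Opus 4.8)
The plan is to mirror the structure of the proof of Lemma \ref{lem:jand_correct} almost verbatim, exploiting the parallel design of the two gadgets. First I would establish the same rigidity observation: for any move $x \in \moveset$, no tile inside the $j$-{\sc Or} gadget shifts unless at least one of {\sf A} or {\sf B} is activated, since in the inactive state there are no two adjacent tiles of equal value within the gadget. This isolates the gadget from spurious combinations and means that every path toward activating {\sf C} must begin with an input reversal.

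Next I would set up the backward-induction argument. The key claim is that for {\sf C} to be activated in configuration $\C_i$ (i.e. for the {\sf C}-tile to reach value $2^{j+4}$), some earlier configuration must satisfy $\C_{i-t}(p,q) = 2^{j+3}$ at the relevant interior location $(p,q)$ feeding into {\sf C}, for the appropriate delay $t$. As in the {\sc And} case, I would present this as a dependency tree (the {\sc Or} analogue of the \texttt{and-tree.pdf} diagram) and assert that its correctness follows by induction: assuming the value required in $\C_{k-1}$ is necessary to produce the stated value in $\C_k$, each combination step forces the predecessor value. Thus the lemma reduces to determining under what input conditions a configuration with interior value $2^{j+3}$ can arise.

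The crucial difference from Lemma \ref{lem:jand_correct}, and the step I expect to be the main obstacle, is the \emph{sufficiency} direction rather than the necessity direction. For the {\sc And} gadget one shows that activating a single input is \emph{insufficient}; for the {\sc Or} gadget one must instead show that activating \emph{either} single input already \emph{suffices} to drive {\sf C} out, while no movement at all occurs when \emph{neither} is activated. So I would argue: if {\sf A} is activated (with or without {\sf B}), then for some $\C$ we have $\C(3,1) = 2^{j+1}$, and there is a sequence of moves cascading this value up the gadget so that the interior feeder location reaches $2^{j+3}$ and {\sf C} reaches $2^{j+4}$; by the symmetry of the {\sc Or} gadget under swapping {\sf A} and {\sf B}, the same holds if only {\sf B} is activated. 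Combined with the rigidity observation (neither input activated $\Rightarrow$ no movement $\Rightarrow$ {\sf C} stays at $2^j$, i.e. faces in), this establishes the biconditional.

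The delicate point is verifying that a single activated input genuinely produces the needed chain of combinations up to $2^{j+4}$ without being blocked, since unlike the {\sc And} gadget there is no partner tile to combine against at the critical location. I would handle this by exhibiting (or referencing an appendix figure, paralleling \ref{fig:and_path}) an explicit move sequence realizing the cascade from one input, and noting that the oracle $\B$'s programmed $(2,4)$-lattice responses keep the surrounding board rigid so the cascade is not disrupted. With necessity handled by the dependency tree and sufficiency by the explicit single-input cascade plus the symmetry argument, the proof closes in the same form as the preceding lemma.
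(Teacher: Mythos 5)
Your proposal matches the paper's proof essentially step for step: the same rigidity observation (no adjacent equal tiles, so nothing moves until an input is activated), the same backward-induction dependency tree (the {\sc Or} analogue of the {\sc And} diagram) showing that {\sf A} or {\sf B} must be activated in $\C_{i-4}$ for {\sf C} to activate in $\C_i$, and sufficiency handled by explicit per-input move sequences, which the paper likewise provides in Figure \ref{fig:or_path} (one sequence for {\sf A}, one for {\sf B}). The only discrepancies are in guessed tile values (in the paper the {\sf C}-tile of the $j$-{\sc Or} gadget activates by reaching $2^{j+5}$, not $2^{j+4}$), which do not affect the structure of the argument.
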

\begin{proof}
Again, we see that tiles in the $j$-{\sc Or} gadget will not shift unless {\sf A} or {\sf B} are activated, for $x \in \moveset$, since no adjacent tiles in the gadget's initial configuration are of the same value.

Next, we show that for any sequence of moves resulting in the activation of {\sf C}, either {\sf A} or {\sf B} must be activated. 
Suppose {\sf C} is activated in configuration $\C_i$. Backtracking along the tiles in the gadget, we show that for {\sf C} to be activated in $\C_i$, then in $\C_{i-4}$, {\sf A} or {\sf B} must be activated. The diagram
below describes the required configurations. The correctness of the diagram is proven inductively by assuming that the value(s) described in $\C_{k-1}$ is a necessary condition for the value(s) in $\C_{k}$ to be reached.

{\centering
\includegraphics{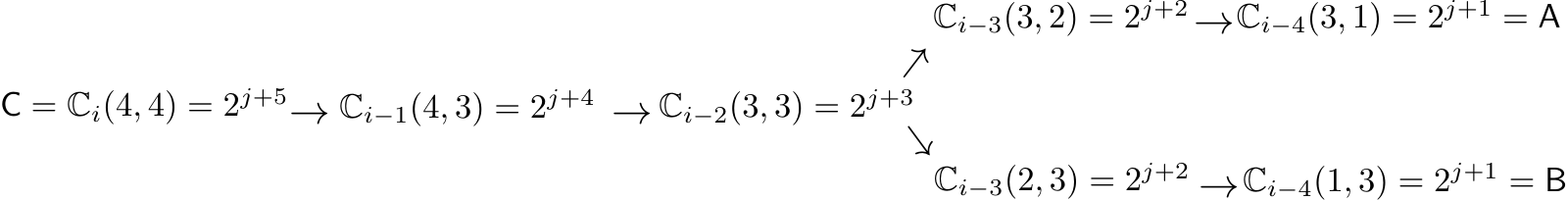}
} \\

The validity of the moves in the diagram can be verified by inspection of the $j$-{\sc Or} gadget in Figure \ref{fig:ncl_2048}(b). The fact that $\C_{i-4}(\text{\sf A}) = 2^{j+1}$ or $\C_{i-4}(\text{\sf B}) = 2^{j+1}$ must hold
for $\C_i(\text{\sf C}) = 2^{j+5}$ is immediate from the diagram; we conclude that {\sf C} can be activated, and thus face out, if and only if {\sf A} or {\sf B} is activated, and therefore facing in.
\end{proof}

An example of a sequence of moves to activate {\sf C} for a $4$-{\sc Or} vertex gadget is in Appendix \ref{app:seq} (Figure \ref{fig:or_path}).

\subsection{The Reversible NCL A{\fakesmallcaps ND}/O{\fakesmallcaps R} Gadget}
Now that we have described the NCL {\sc And} and {\sc Or} gadgets, we have one final vertex gadget left to define. Observe that for either the NCL $j$-{\sc And} or $j$-{\sc Or} gadget, once it is activated, the configuration cannot
be reversed; that is, once a gadget is set in a particular configuration, it cannot be altered. Thus, if the initial configuration of an NCL machine $\C_0$ contains an activated {\sc And} vertex, and the goal is to reverse one of
the incoming edges to the vertex, then this is impossible in our vertex gadget construction but possible in NCL.

To alleviate this, we construct a final gadget, namely the Reversible $j$-{\sc And}/{\sc Or} gadget. This gadget is placed in an instance of {\sc 2048-Game} when an {\sc And} vertex in the original NCL constraint graph is
activated, or when both edges of weight 1 are entering an {\sc Or} vertex. The conditions under which a Reversible $j$-{\sc And}/{\sc Or} gadget is used are outlined in Figure \ref{fig:and_or_cond}.

\begin{figure}[h]
\begin{center}
\includegraphics{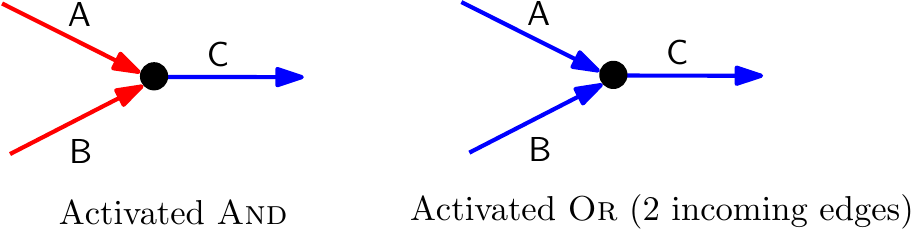}
\end{center}
\caption{The conditions under which a Reversible $j$-{\sc And}/{\sc Or} is used in place of a normal vertex gadget.}\label{fig:and_or_cond}
\end{figure}

We now introduce the gadget. {\sf A} and {\sf B} are considered to be activated if their value is not equal to $2^{j+4}$. Thus, in the initial configuration they are facing {\it in}, and when activated, are facing {\it out}. {\sf C} is activated when
its value is not equal to $2^{j}$; when it is activated, it is facing {\it in}, and when it is not, it is facing {\it out}. The gadget is defined in Figure \ref{fig:reverse_and_or}.

\begin{figure}[h]
\begin{center}
\includegraphics[width=4.75cm]{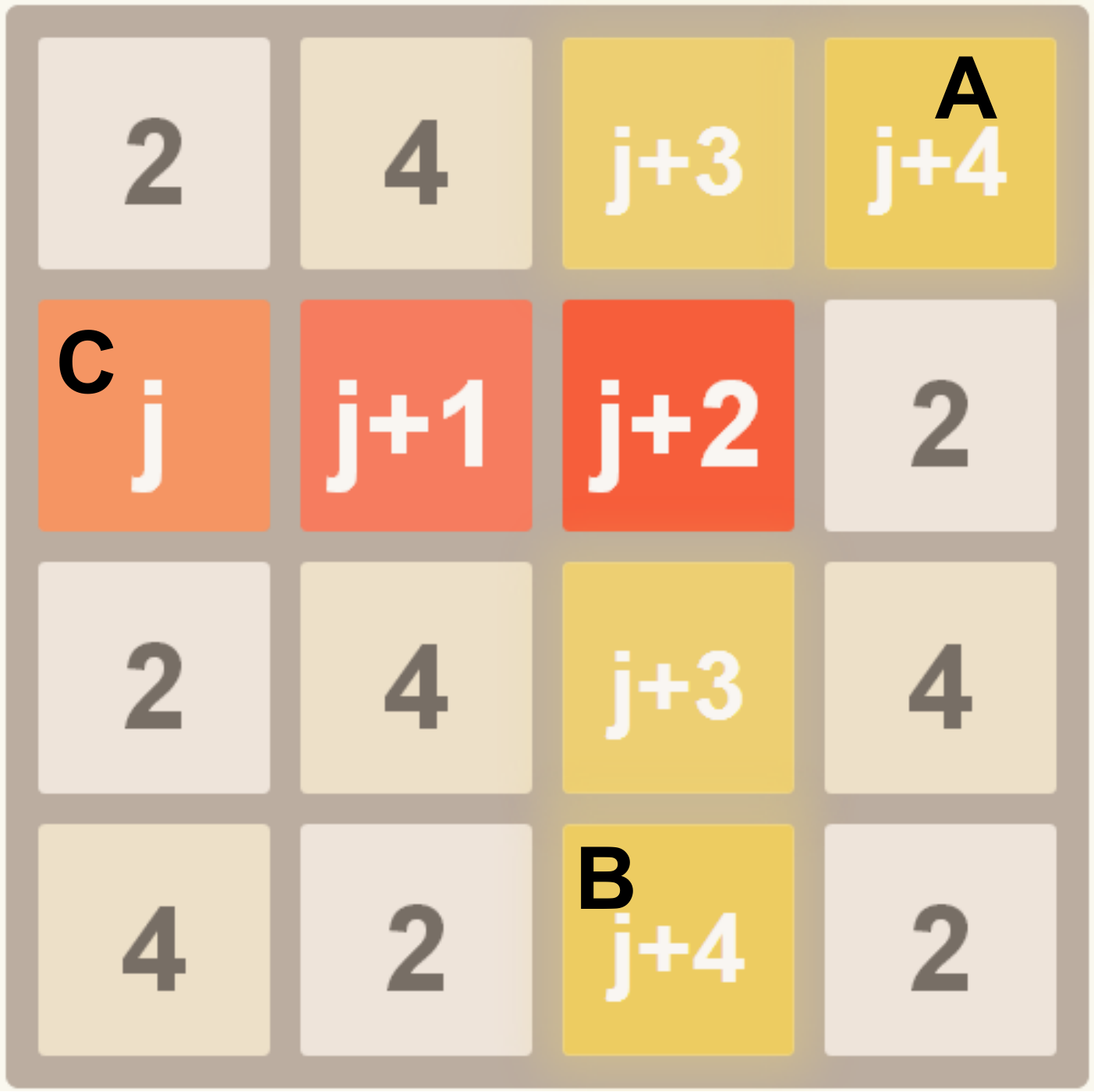}
\end{center}
\caption{The Reversible $j$-{\sc And}/{\sc Or} vertex gadget.}\label{fig:reverse_and_or}
\end{figure}

One specific that we note about the gadget is it's behavior after the tile at location $(2,3)$ (in the initial configuration, $2^{j+2}$) is shifted from that location, the computer player $\B$ places a new tile of value $2^{j+3}$ at $(2,3)$.
That way, both {\sf A} and {\sf B} can be activated, and therefore face out. However, it is also possible for either {\sf A} or {\sf B} to face out, but not the other. In fact, we prove that all possible configurations of the vertex gadgets
in Figure \ref{fig:and_or_cond} are reachable with the Reversible $j$-{\sc And}/{\sc Or} gadget. These reachable configurations are listed in Figure \ref{fig:reachable} below;

\begin{figure}[h]
\begin{center}
\includegraphics{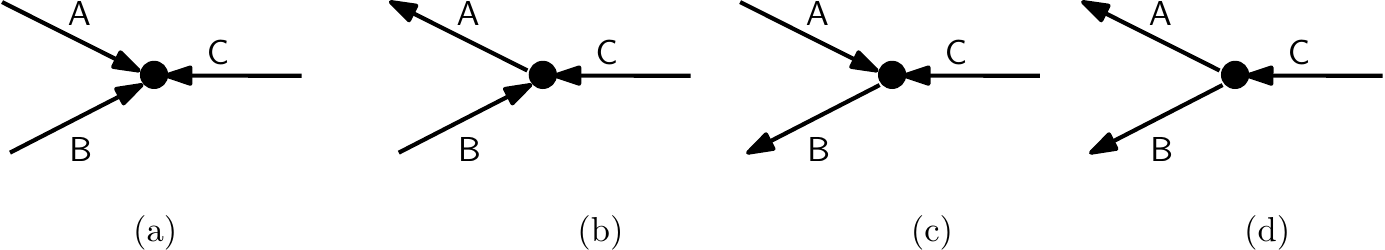}
\end{center}
\caption{Possible configurations reachable by Activated {\sc And} and {\sc Or} vertices with two incoming edges.}\label{fig:reachable}
\end{figure}

We now prove that these configurations are reachable by exhibiting sequences of moves;

\begin{lem}\label{lem:reach_rev_and_or}
The configurations listed in Figure \ref{fig:reachable} are reachable for any given Reversible $j$-{\sc And}/{\sc Or} gadget.
\end{lem}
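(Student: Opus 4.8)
The plan is to prove Lemma~\ref{lem:reach_rev_and_or} by the same constructive, move-by-move method used in Lemmas~\ref{lem:jand_correct} and~\ref{lem:jor_correct}: rather than reasoning abstractly about reachability, I would exhibit an explicit sequence of moves in $\moveset^*$ witnessing each of the target configurations in Figure~\ref{fig:reachable}. Since the Reversible $j$-{\sc And}/{\sc Or} gadget lives on a $4\times 4$ subinstance embedded in the rigid $(2,4)$ lattice, each move affects only the gadget's own tiles (the lattice absorbs the rest of the board), so I can analyze the gadget in isolation. The essential new ingredient over the earlier lemmas is the programmed oracle response: after the tile of value $2^{j+2}$ at location $(2,3)$ is shifted away, $\B$ replaces it with a fresh $2^{j+3}$ tile. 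I would first record this oracle behavior precisely as a rule, since it is exactly what lets the gadget be \emph{reset} and hence reused, which is the whole point of reversibility.

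First I would enumerate the finitely many configurations drawn in Figure~\ref{fig:reachable} and partition them by how many of {\sf A}, {\sf B} are facing out (none, exactly one, or both), together with the corresponding orientation of {\sf C}. For each target I would give a short witnessing sequence: starting from the initial configuration (where, per the gadget definition, {\sf A} and {\sf B} face in and {\sf C} faces out), apply a move that dislodges the $2^{j+2}$ tile at $(2,3)$, invoke the oracle rule to regenerate a $2^{j+3}$ there, and then route the appropriate combinations so that the {\sf A}-tile and/or {\sf B}-tile reach a value $\neq 2^{j+4}$ (activating them to face out) while {\sf C} is driven to a value $\neq 2^j$ (deactivating it to face in). The asymmetric cases---only {\sf A} out, or only {\sf B} out---are handled by choosing the move direction so that just one of the two weight-1 connectors combines, and here I would lean on the fact that the gadget layout is symmetric in {\sf A} and {\sf B}, so one case follows from the other by reflection.

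The step I expect to be the main obstacle is verifying that the asymmetric configurations---exactly one of {\sf A}, {\sf B} out---are genuinely reachable \emph{and} that they do not force an unintended combination on the other connector. In the earlier {\sc And} and {\sc Or} lemmas the sliding dynamics naturally couple the two inputs, so producing a one-sided output requires a carefully ordered sequence (and possibly a well-timed oracle placement) that isolates one branch. I would discharge this by presenting the explicit move list for each such case and checking tile-by-tile that after each move the non-activated connector's tile retains value $2^{j+4}$; as in Lemmas~\ref{lem:jand_correct} and~\ref{lem:jor_correct}, correctness of the sequence can be verified by inspection of the gadget in Figure~\ref{fig:reverse_and_or}. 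For the reader's convenience I would defer the full move-by-move diagrams to Appendix~\ref{app:seq}, in parallel with the {\sc And} and {\sc Or} examples, and keep the proof body focused on the case analysis and the oracle reset rule that makes all listed configurations attainable.
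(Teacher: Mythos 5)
Your proposal matches the paper's proof in approach: the paper likewise proves the lemma by exhibiting, for each configuration in Figure~\ref{fig:reachable}, an explicit witnessing sequence of moves (e.g.\ $\Rightarrow, \Leftarrow, \Leftarrow$ for the first), relying on the programmed oracle replacement at $(2,3)$ and verifying correctness by inspection of the gadget. Your additional scaffolding (tile-by-tile checks, symmetry between {\sf A} and {\sf B}) is just a more detailed version of the same constructive argument.
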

\begin{proof}
The following diagram depicts the Reversible $j$-{\sc And}/{\sc Or} gadget in the four configurations listed in Figure \ref{fig:reachable}. The proof of the lemma below contains the sequences of moves necessary to reach
each configuration; this can be verified by inspection. The diagram shows the configurations for a Reversible $j$-{\sc And}/{\sc Or} gadget.

\begin{center}
\begin{tabular}{c c c c}
\includegraphics[width=3.2cm]{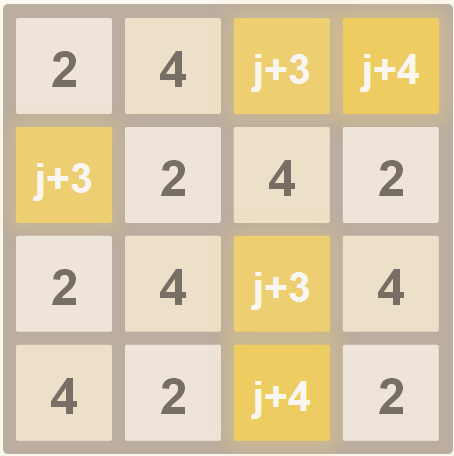} & \includegraphics[width=3.2cm]{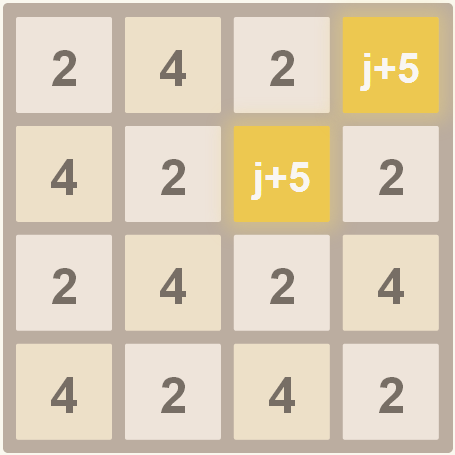} & \includegraphics[width=3.2cm]{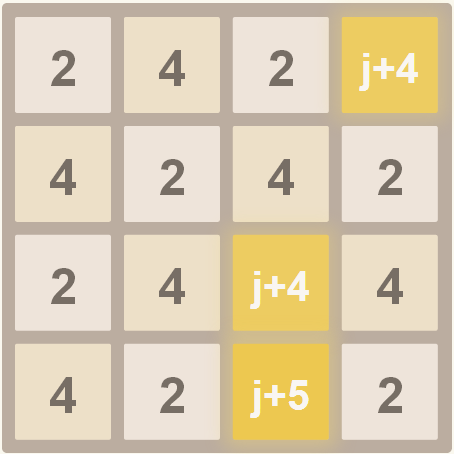} & \includegraphics[width=3.2cm]{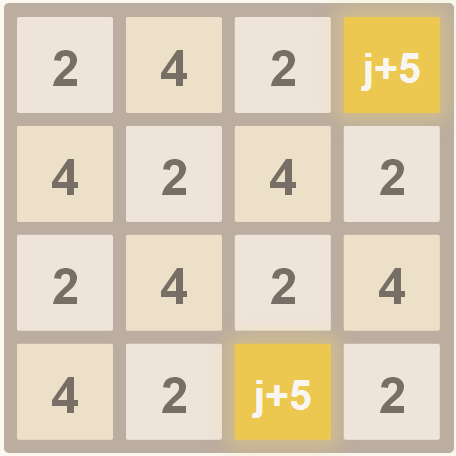} \\
(a) & (b) & (c) & (d)
\end{tabular}
\end{center}

We now give sequences of moves to achieve the given configurations. The sequences can easily be verified by inspection of the gadget;
\begin{inparaenum}[(a)]
\item $\Rightarrow, \Leftarrow, \Leftarrow$
\item $\Rightarrow, \Rightarrow, \Rightarrow, \Uparrow, \Rightarrow, \Uparrow, \Uparrow$
\item $\Rightarrow, \Rightarrow, \Rightarrow, \Downarrow, \Downarrow$
\item $\Rightarrow, \Rightarrow, \Rightarrow, \Uparrow, \Rightarrow, \Downarrow, \Downarrow$.
\end{inparaenum}
Thus, the configurations in Figure \ref{fig:reachable} are all reachable by the Reversible $j$-{\sc And}/{\sc Or} vertex, completing
the proof.
\end{proof}

\subsection{Constructing Arbitrary Planar Constraint Graphs}\label{sec:planar}
We now describe how to construct arbitrary planar graphs from the NCL vertex gadgets. The vertex gadgets, as well as the connection pieces described
in this section, are embedded in an arbitrarily large game board $\G$ comprised of $4 \times 4$ sub-instances. Such a layout is shown in Figure \ref{fig:board}

\begin{figure}[h]
\begin{center}
\includegraphics[width=5cm]{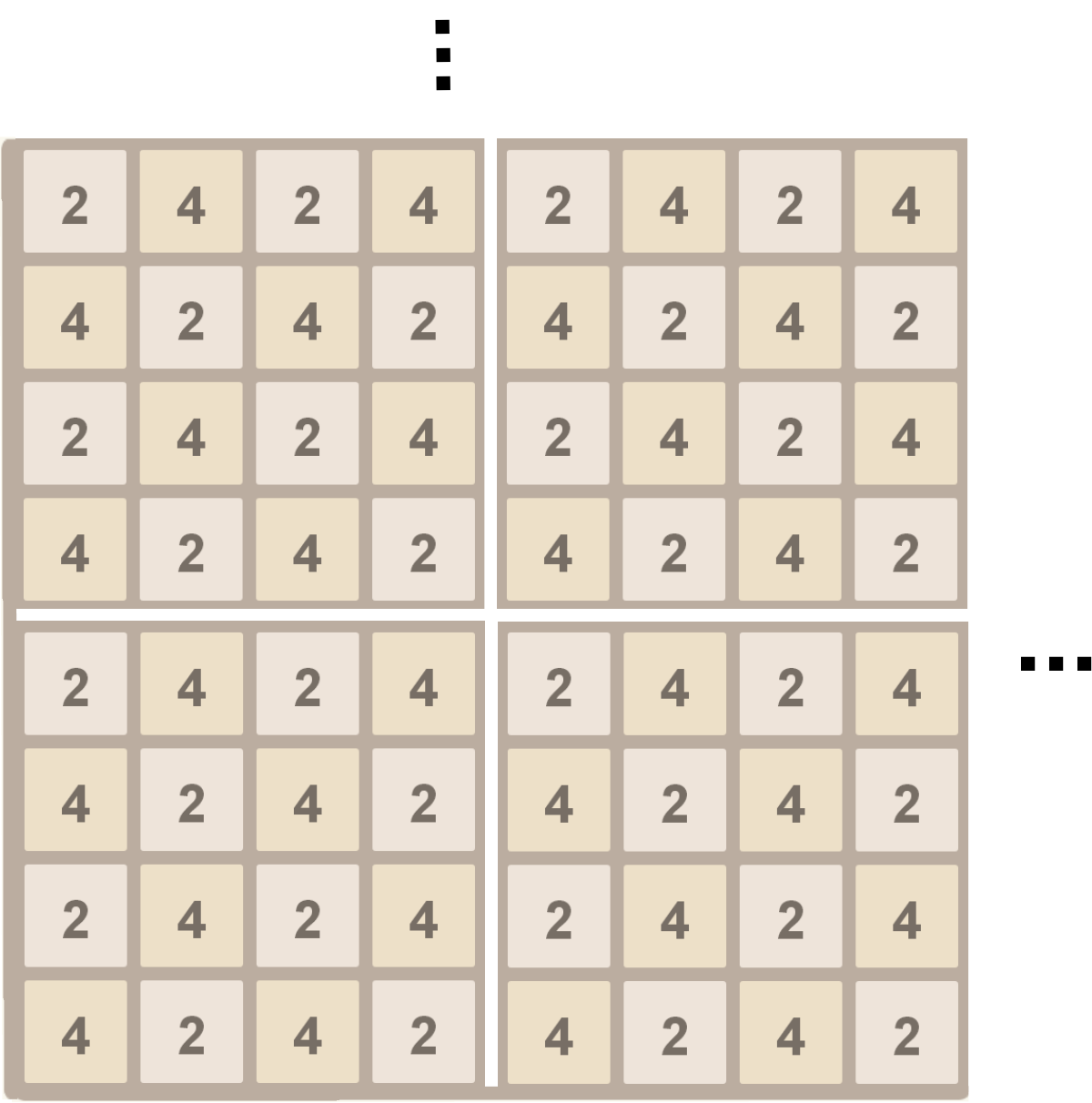}
\end{center}
\caption{An arbitrarily large game board $\G$ comprised of $4 \times 4$ sub-instances of {\sc 2048-Game}.}\label{fig:board}
\end{figure}


There is also a second issue that we must navigate. Due to the vertex gadgets' connection
tiles ({\sf A}, {\sf B}, and {\sf C}) containing numerical values of $2$, connecting arbitrary vertices is not as simple as in
other reductions from NCL. 

We address this by turning to the famed Four Color Theorem (see \cite{mat09} for more detail regarding several proofs).
Since we are constructing {\it planar} NCL constraint graphs, we can turn to the Four Color Theorem, and in particular, an $O(n^2)$ algorithm
by Robertson, et. al. \cite{rob97} to find a four coloring. By assigning colors to the vertices in the constraint graph, we are able to specify the
value of $j$ for the $j$-{\sc And}, $j$-{\sc Or}, and Reversible $j$-{\sc And/Or} gadgets. For each of the four colors, let $j = 4, 5, 6, 7$ respectively.
Note that the diagrams in Appendix \ref{app:seq} are for NCL gadgets with $j=4$.

\paragraph{The $k$-$k'$-C{\fakesmallcaps ONNECTION} Gadget}
We now describe a connection piece to connect gadgets with differing values of $j$. We call this a $k$-$k'$-{\sc Connection} piece, and its layout in our
context of the $4 \times 4$ sub-instance of {\sc 2048-Game}. The diagram is contained in Figure \ref{fig:con}

\begin{figure}[h]
\begin{center}
\includegraphics[height=4.5cm]{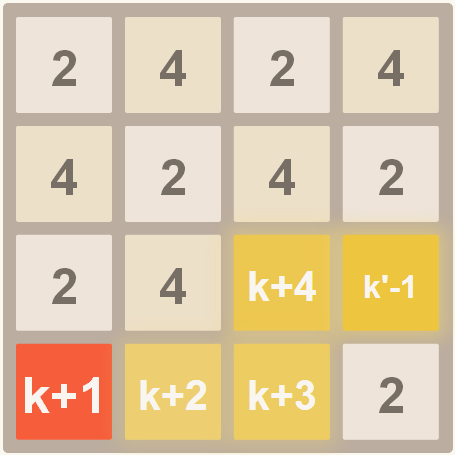}
\end{center}
\caption{The $k$-$k'$-{\sc Connection} gadget.}\label{fig:con}
\end{figure}

The gadget is attachable to a gadget that will terminate with value $2^k$ on any one of its output edges ({\sf C} for normal {\sc And/Or} gadgets, and {\sf A} and {\sf B} for the Reversible {\sc And/Or} gadget).
We note that this gadget can be shifted vertically within the $4 \times 4$ sub-instance to yield connection pieces for output tiles at varying locations. 

The major ``trick," so to speak, lies in the gadget's activation sequence. When the tile with initial value of $k+4$ at $(3,3)$ vacates that location, the computer player $\B$ will place a tile of value $k'-1$
at $(3,3)$. Then, the tile is free to combine with the second tile of value $k'-1$ at $(3,4)$, which will result in $(3,4)$ having the value $k'$, which in turn implies that it will activate a gadget with $j=k'$.

\begin{lem}\label{lem:con}
The $k$-$k'$-{\sc Connection} gadget contains an activation sequence such that an outgoing tile of value $k$ will activate a gadget with $j=k'$.
\end{lem}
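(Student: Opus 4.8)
The plan is to prove this constructively, in the same style as Lemmas~\ref{lem:jand_correct} and~\ref{lem:jor_correct}: I would exhibit an explicit sequence of moves and track the configuration of the $4\times 4$ sub-instance in Figure~\ref{fig:con} step by step, verifying each transition by inspection of the gadget. First I would fix the initial configuration of the $k$-$k'$-{\sc Connection} gadget and observe, exactly as in the {\sc And} and {\sc Or} cases, that none of its internal tiles shift for any $x \in \moveset$ until the attached upstream gadget delivers its output tile of value $2^k$. This holds because the gadget's resting configuration contains no pair of adjacent equal-valued tiles, so it remains inert until the incoming edge is activated.

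Once the tile of value $2^k$ arrives, I would follow its propagation through the gadget under the move sequence read off from Figure~\ref{fig:con}, maintaining the invariant that at each step the configuration agrees with the claimed intermediate state (each value being a necessary precondition for the next, as in the inductive backtracking of the previous lemmas). The crucial step --- and the one I expect to be the main obstacle --- is the value conversion from $2^k$ to $2^{k'}$, since the combination rule only merges tiles of \emph{equal} value and therefore cannot by itself bridge two gadgets coloured with differing values of $j$. This is precisely where the \emph{perfect knowledge} afforded by the programmable oracle $\B$ becomes essential: when the tile initially at $(3,3)$ (of value $2^{k+4}$) vacates that location, I would invoke the programmed response $\B(\C,x)$, which places a fresh tile of value $2^{k'-1}$ at $(3,3)$. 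The legitimacy of this placement is guaranteed in our reduction precisely because $\A$ has oracle access to $\B$ and we are free to specify $\B$'s responses.

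Finally, I would observe that the newly placed $2^{k'-1}$ tile at $(3,3)$ is now adjacent to the pre-existing $2^{k'-1}$ tile at $(3,4)$, so a single further move in the appropriate direction combines them into a tile of value $2^{k'}$ at $(3,4)$. By the activation convention for a gadget with $j = k'$, this is exactly the signal required to activate the downstream gadget. Since every transition is forced and the only nondeterministic step is $\B$'s programmed placement, the exhibited sequence is the desired activation sequence, which proves the lemma.
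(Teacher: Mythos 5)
Your proposal is correct and takes essentially the same approach as the paper: the paper's proof likewise exhibits an explicit activation sequence (for the concrete values $k=7$, $k'=14$) and hinges on exactly the trick you identify, namely the programmed oracle $\B$ placing a tile of value $2^{k'-1}$ at $(3,3)$ once that square is vacated, which then merges with the pre-existing $2^{k'-1}$ tile at $(3,4)$ to produce $2^{k'}$. The only difference is cosmetic --- you phrase the argument for general $k$, $k'$, while the paper verifies one representative instance by inspection of the figures.
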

\begin{proof}
We will exhibit an activation sequence of the gadget to demonstrate that it indeed fulfills its promise, namely that it will activate an NCL vertex gadget with $j=k'$ if the outgoing tile of the previous
gadget has value $k$. This is exhibited with values $k=7$ and $k'=14$.  \\

\begin{tabular}{c c c c c}
\includegraphics[width=2.5cm]{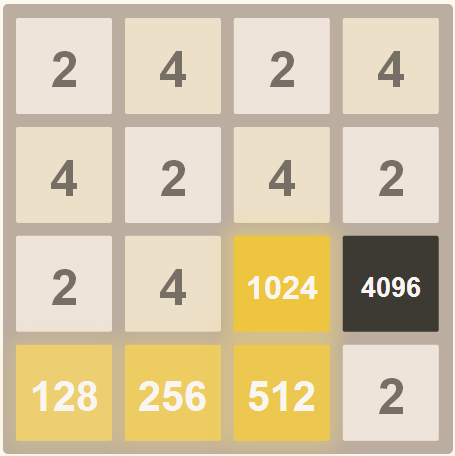} & \includegraphics[width=2.5cm]{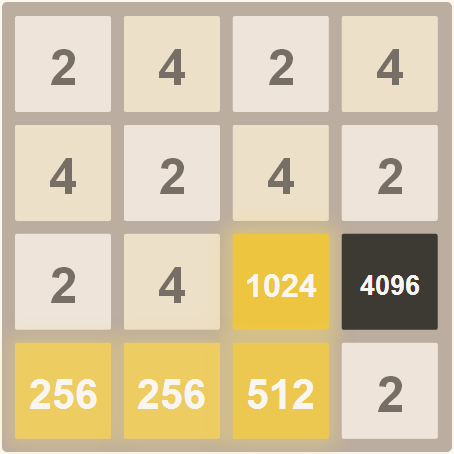} & \includegraphics[width=2.5cm]{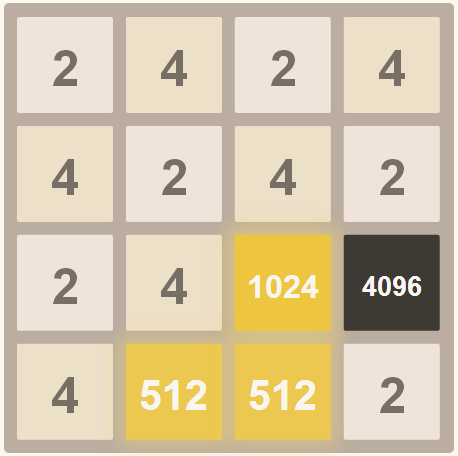} & \includegraphics[width=2.5cm]{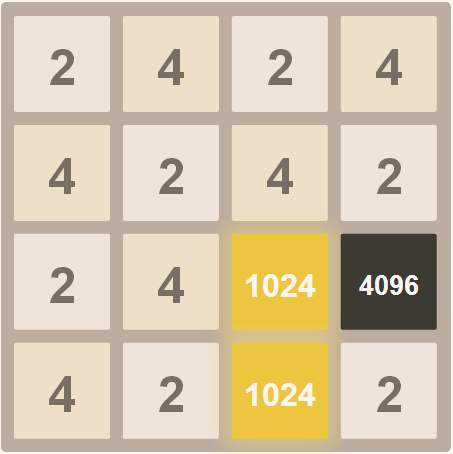}  & \includegraphics[width=2.5cm]{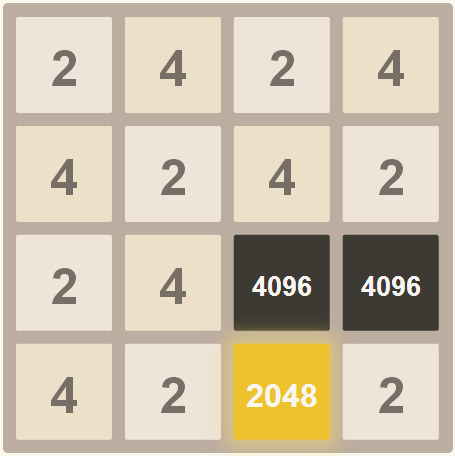}\\ 
{\sf Initial State} & $\Rightarrow$ & $\Rightarrow$ & $\Rightarrow$ & $\Downarrow$ \\
 \multicolumn{5}{c}{\includegraphics[width=2.5cm]{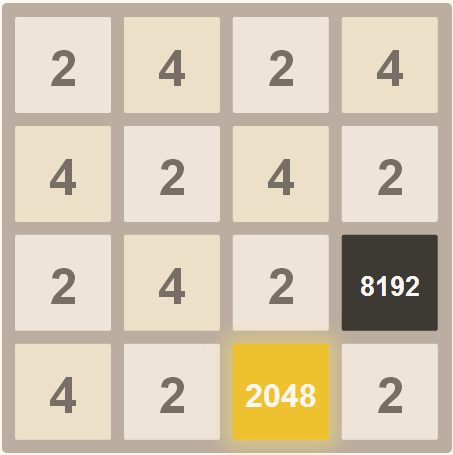}} \\  \multicolumn{5}{c}{$\Rightarrow$} \\
\end{tabular}

\end{proof}

\paragraph{The $k$-L{\fakesmallcaps INE} and $k$-C{\fakesmallcaps ORNER} Pieces} We finally describe two additional connection pieces which will finish our construction
of arbitrary planar graphs. These are the {\sc $k$-Line} and {\sc $k$-Corner} gadgets, and are outlined below in Figure \ref{fig:lin_corn};

\begin{figure}[h]
\begin{center}
\begin{minipage}[b]{.25\textwidth}
\centering
\includegraphics[width=3cm]{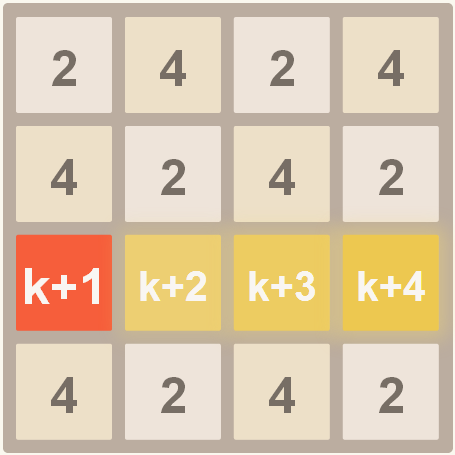}
\subcaption{{\sc $k$-{\sc Line}}}
\end{minipage}%
\begin{minipage}[b]{.25\textwidth}
\centering
\includegraphics[width=3cm]{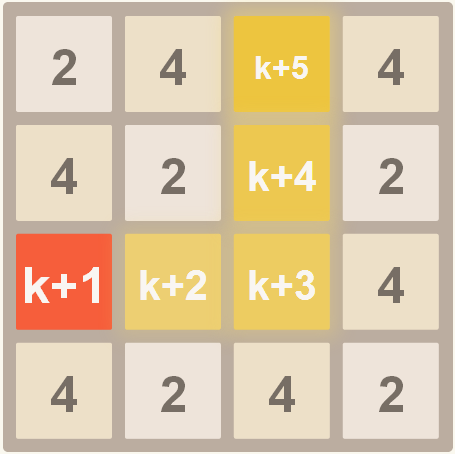}
\subcaption{{\sc $k$-Corner}}
\end{minipage}
\end{center}
\caption{$k$-{\sc Line} and $k$-{\sc Corner} connection pieces.}\label{fig:lin_corn}
\end{figure}

\subsection{Proof of Theorem \ref{thm:game_pspace}}
Now that we have finished describing the components of the reduction, we are ready to prove the main theorem (that {\sc 2048-Game} is PSPACE-Complete);

\begin{proof}[Proof of Theorem \ref{thm:game_pspace}]
Immediate from the construction described in Section \ref{sec:reduction}. Specifically, we first find a valid four-coloring of the NCL constraint graph $G$ using the algorithm
due to Robertson, et. al. \cite{rob97}, and then follow the procedure outlined in Section \ref{sec:planar} to connect the NCL vertex gadgets together using a combination of
{\sc $k$-$k'$-Connection, $k$-Line}, and {\sc $k$-Corner} pieces. 

Next, we must orient the edes of the constraint graph $G$ according to its current configuration. We set the tiles in the vertex gadgets accordingly; if an edge corresponds to an
{\it activated} configuration, the appropriate tile value is set on the vertex gadget (either at {\sf A}, {\sf B}, or {\sf C}).

We have now shown how to convert an orientation of an arbitrary planar constraint graph into an instance of {\sc 2048-Game}; if we can decide {\sc 2048-Game}, then
we can decide {\sc Config-to-Config}. Thus, {\sc 2048-Game} is PSPACE-Hard, and by Lemma \ref{lem:game_in_pspace}, $\text{\sc 2048-Game} \in \text{PSPACE}$. Therefore,
we conclude that {\sc 2048-Game} is PSPACE-Complete.

\end{proof}

\section{An FPT Algorithm for 2048-$k$-M{\fakesmallcaps OVES}}\label{sec:game_fpt}
On one hand, our paper shows that the most natural problem that emerges from 2048 is intractable. However, we show in this section
 that another variant of the game is fixed-parameter tractable (FPT). Recall that {\sc 2048-$k$-Moves} is the problem of deciding whether or not there exists a sequence of moves
 of length $k$ $\mathcal{S} \in \moveset^k$ such that after each of the moves is executed, the board will be in a winning configuration.
 
 This problem is decidable in polynomial time due to the fact that the number of moves is constant. While the constant may be quite large, it does not change the asymptotic running time
 of our algorithm. First, we introduce the notion of a {\it game tree};
 
 \begin{figure}[h]
 \begin{center}
 	\includegraphics[width=14cm]{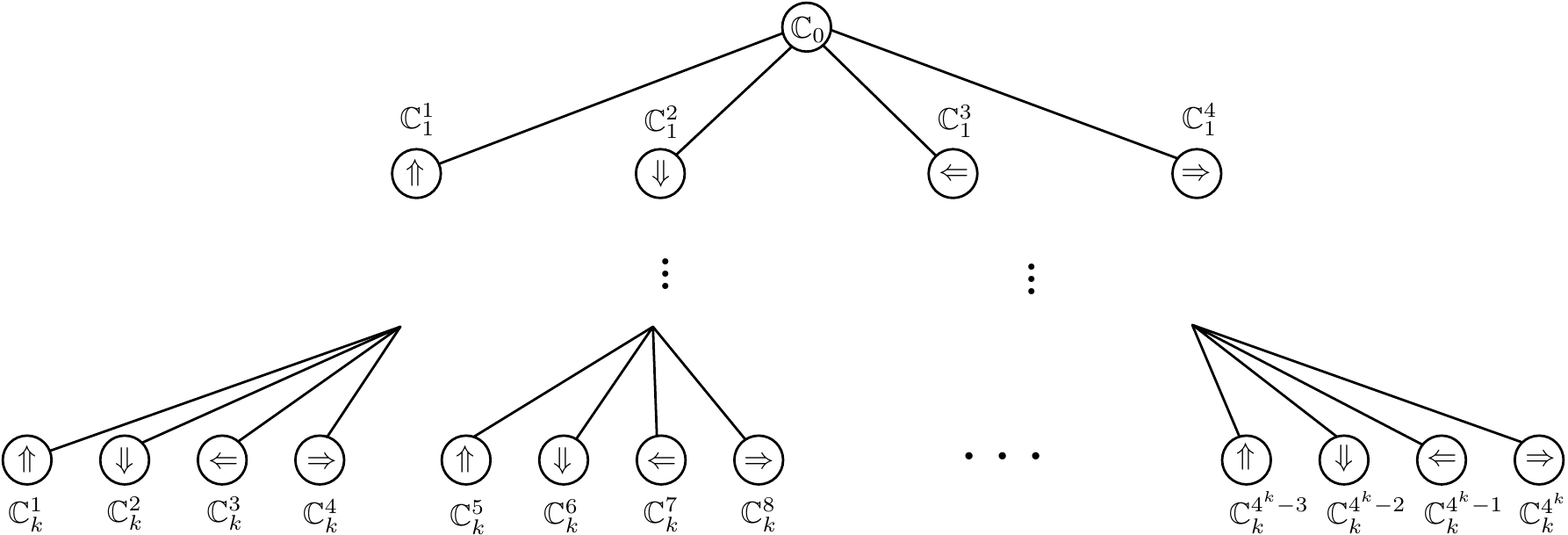}
 \end{center}
 \caption{A game tree for an instance of {\sc 2048-$k$-Moves} of depth $k$.}
 \end{figure}
 
 Each vertex of the tree that can be reached by a path of length $\ell$ represents a possible configuration of the game after $\ell$ moves. The tree can be constructed in $O(4^k)$ time
 by querying the oracle for the $4^k$ possible sequences of moves $\mathcal{S} \in \moveset^k$.
 Our algorithm, in essence, traverses the vertices of the game tree and halts when a winning configuration is found, or if there are none. We are now ready to prove the second
 result of our paper;
 
 \begin{thm}\label{thm:game_fpt}
 Given an $n \times n$ game board $\G$, {\sc 2048-$k$-Moves} is solvable in $O(4^k n^2)$ time.
 \end{thm}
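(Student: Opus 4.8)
The plan is to give an explicit traversal algorithm over the depth-$k$ game tree described above and to account for the cost incurred at each of its nodes. Since player $\A$ has only four available moves at every step, and since the oracle $\B$ makes $\A$'s game deterministic (each configuration--move pair $(\C, x)$ has a unique successor, obtained by applying $x$ and then placing the tiles returned by $\B(\C, x)$), the set of configurations reachable in at most $k$ moves forms a tree of branching factor $4$ and depth $k$. I would construct and search this tree by a depth-first (or breadth-first) traversal, generating each child configuration from its parent on the fly rather than storing the whole tree at once.

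First I would specify how a single edge of the tree is realized. Given a configuration $\C$ and a move $x \in \moveset$, applying $x$ amounts to processing each of the $n$ relevant lines (rows or columns, depending on $x$): sliding the nonzero tiles toward the chosen side and merging equal adjacent tiles. Each line can be handled in $O(n)$ time by a single scan, so the shift-and-merge step costs $O(n^2)$. I would then query $\B(\C, x)$, whose response is a set of at most $n^2$ triples $(i_t, j_t, \ell_t)$, and write each value $2^{\ell_t}$ into its named cell; incorporating the response therefore also costs $O(n^2)$. Hence producing one child configuration, including the oracle consultation, is $O(n^2)$.

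Next I would bound the node count and the per-node overhead. The tree has at most $\sum_{i=0}^{k} 4^i \le \tfrac{4}{3}\,4^k = O(4^k)$ nodes. At each node I test the winning condition by scanning the board once for a tile of value $2^m$, which is $O(n^2)$; if such a tile is found at a node of depth exactly $k$, I halt and output {\sf YES}, and if the traversal completes without success I output {\sf NO}. (If desired, Lemma \ref{lem:nondec} lets me prune any branch along which $\max(\C)$ has already exceeded $2^m$ without value $2^m$ ever appearing, but this is not needed for the bound.) Combining the $O(4^k)$ nodes with the $O(n^2)$ of work---edge realization, oracle response, and goal test---charged to each node yields a total running time of $O(4^k) \cdot O(n^2) = O(4^k n^2)$, as claimed.

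The only real subtlety, and thus the step I would be most careful about, is the per-node accounting rather than any combinatorial difficulty: one must verify that both applying a move and folding in the adversary's response stay within $O(n^2)$, since a naive implementation that rescans the entire board for every merged pair, or that treats the oracle response as unbounded, could inflate the per-node cost and break the stated bound. Once the $O(n^2)$ per-node budget is established, correctness is immediate because the traversal exhaustively enumerates all $4^k$ move sequences, and the time bound follows directly.
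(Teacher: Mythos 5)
Your proposal is correct and follows essentially the same approach as the paper: an exhaustive traversal of the $4$-ary, depth-$k$ game tree with $O(n^2)$ work charged per node, yielding $O(4^k n^2)$ overall. If anything, your accounting is slightly more careful than the paper's, which asserts the tree can be built in $O(4^k)$ time and only explicitly charges the $O(n^2)$ board scan at the leaves, whereas you correctly fold the $O(n^2)$ cost of realizing each move and oracle response into the per-node budget.
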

 \begin{proof}
 The algorithm can perform any recursive traversal of the game tree to explore the leaf nodes. We first observe that the game tree is in fact a $4$-ary tree. It is well-known that
 for a $d$-ary tree of depth $\ell$, the number of leaf nodes is $d^\ell$. Thus, for our $4$-ary tree of depth $k$, there are $4^k$ leaf nodes.
 
 When each leaf is visited, the value of each of the $n^2$ tiles on the game board is compared
 to $2^m$, the goal tile. If they are equal, halt and output {\sf YES} and the configuration $\C_f$. If a game board is in a game over configuration (i.e. the subsequent configurations of the four
 children nodes are precisely equal to the current configuration), terminate the recursion for that branch of the tree. 
 
 Since we check each of the $n^2$ tiles of $\G$ at each of the $4^k$ leaf nodes, the worst-case running time of the procedure is $O(4^k n^2)$, showing that $\text{\sc 2048-}k\text{\sc -Moves} \in \text{FPT}$.
 \end{proof}

\bibliographystyle{amsplain}
\bibliography{2048-PSPACE}

\appendix

\section{Activation Sequences for A{\fakesmallcaps ND} and O{\fakesmallcaps R} Gadgets}\label{app:seq}

\begin{figure}[h]

\centering
\begin{tabular}{c c c c}
\includegraphics[width=2.75cm]{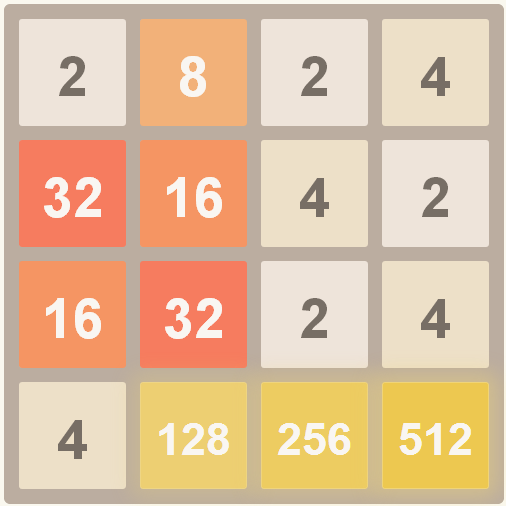} & \includegraphics[width=2.75cm]{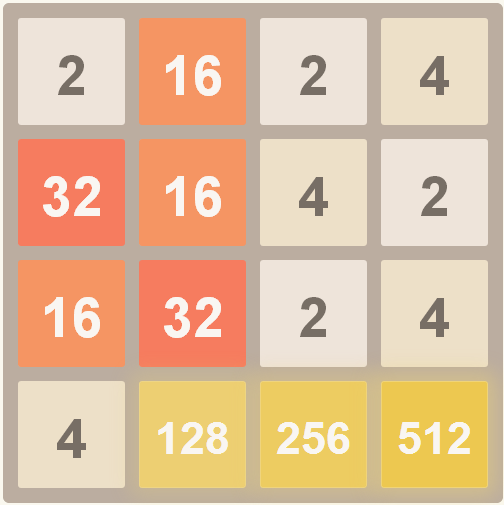} & \includegraphics[width=2.75cm]{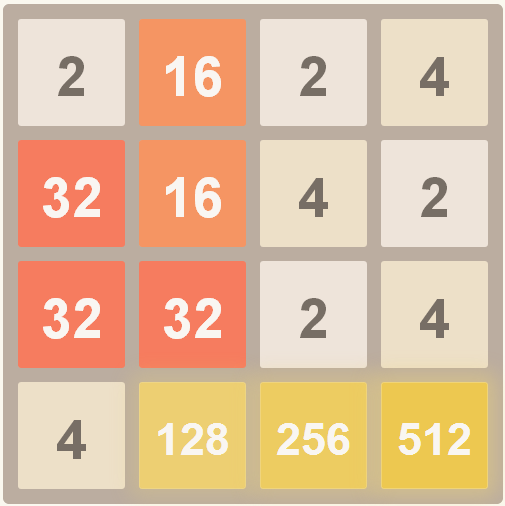} &  \includegraphics[width=2.75cm]{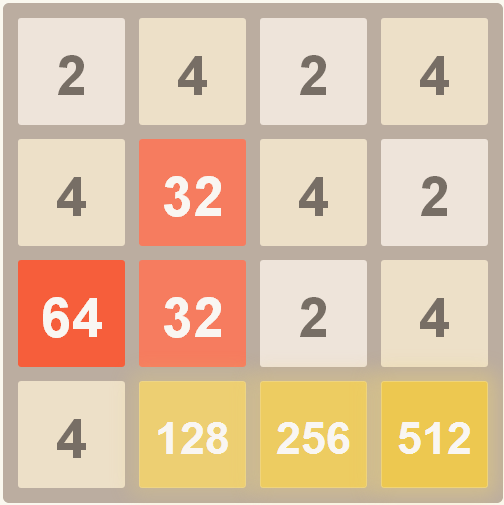}  \\
{\sf Initial State} &  $\Downarrow$ & $\Rightarrow$ & $\Downarrow$ \\
\includegraphics[width=2.75cm]{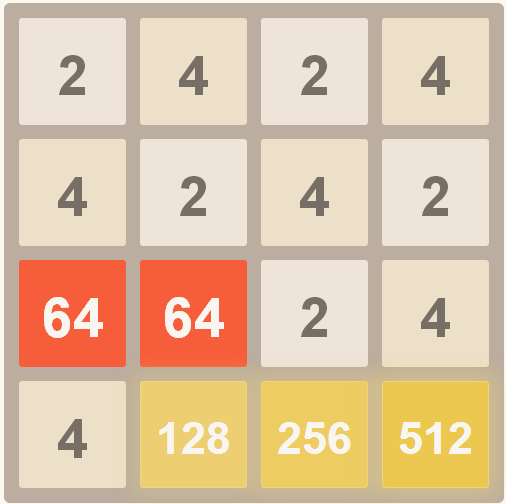} & \includegraphics[width=2.75cm]{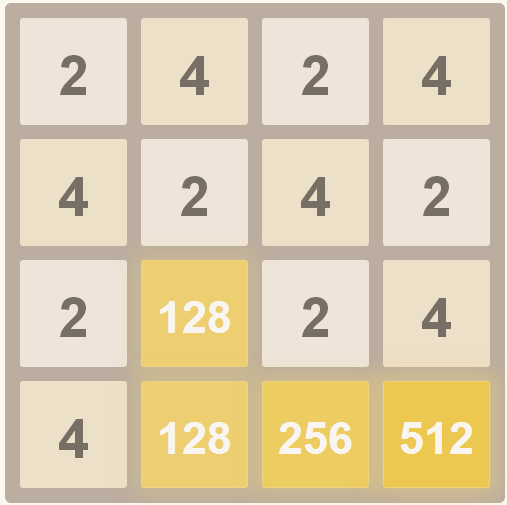} & \includegraphics[width=2.75cm]{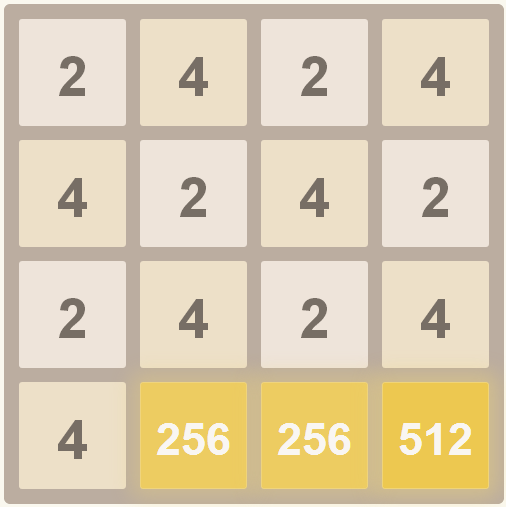} &  \includegraphics[width=2.75cm]{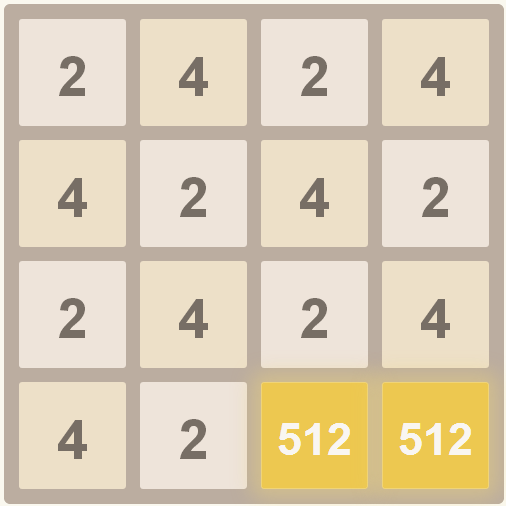}  \\
$\Downarrow$ & $\Rightarrow$ & $\Downarrow$ & $\Rightarrow$ \\
\multicolumn{4}{c}{\includegraphics[width=2.75cm]{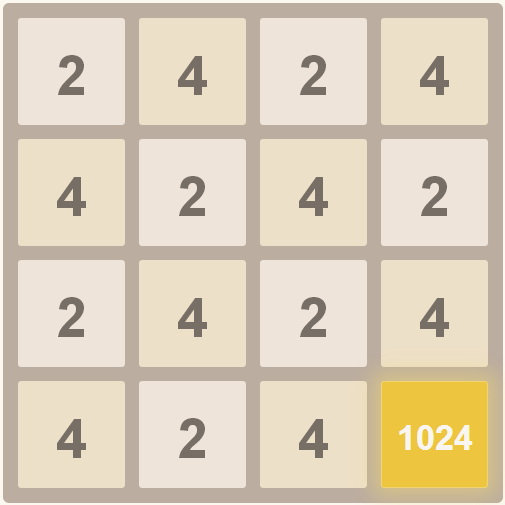}} \\
\multicolumn{4}{c}{$\Rightarrow$}

\end{tabular}
\caption{A possible sequence for activating {\sf C} in a $4$-{\sc And} gadget ($\Downarrow, \Rightarrow, \Downarrow, \Downarrow, \Rightarrow, \Downarrow, \Rightarrow, \Rightarrow$).}\label{fig:and_path}
\end{figure}

\begin{figure}[h]
\centering
\begin{tabular}{c c c c}
\includegraphics[width=2.75cm]{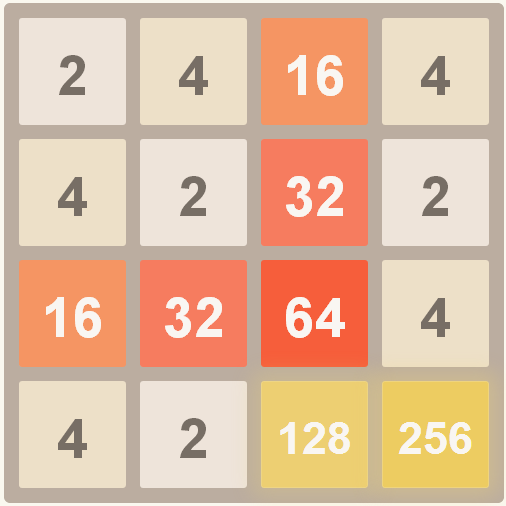} & \includegraphics[width=2.75cm]{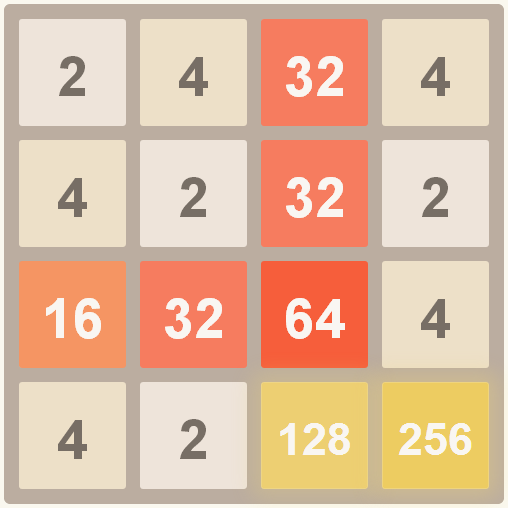} & \includegraphics[width=2.75cm]{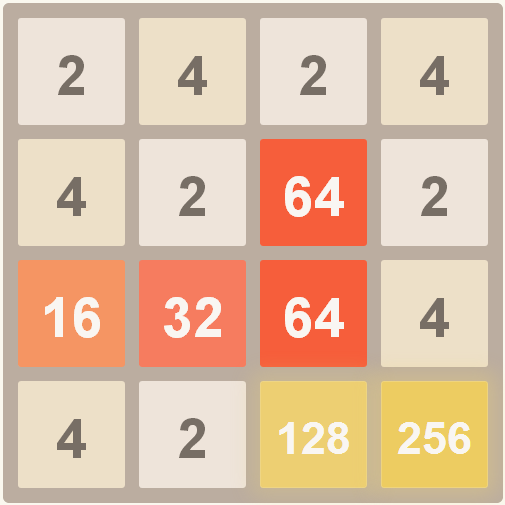} &  \includegraphics[width=2.75cm]{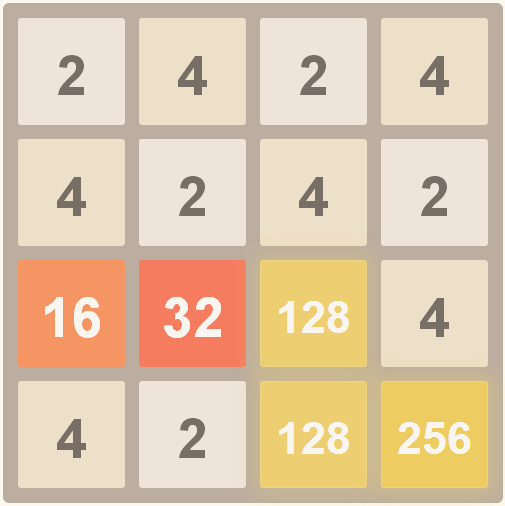}  \\
{\sf Initial State} &  $\Downarrow$(b) & $\Downarrow$(b) & $\Downarrow$(b) \\
\multicolumn{2}{c}{\includegraphics[width=2.75cm]{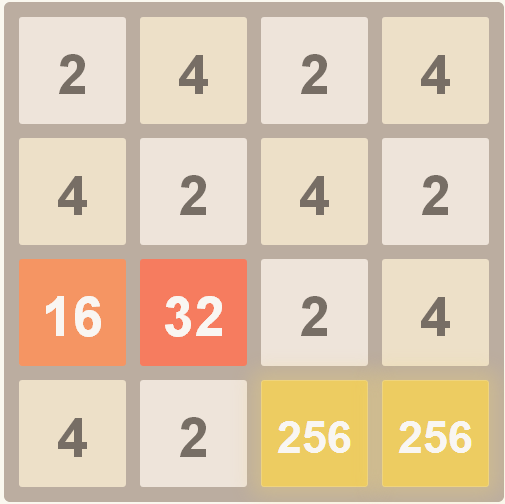}} & \multicolumn{2}{c}{\includegraphics[width=2.75cm]{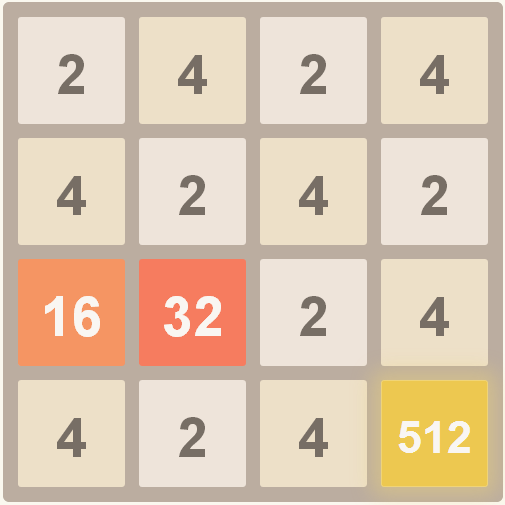}} \\
\multicolumn{2}{c}{$\Downarrow$(b)} &\multicolumn{2}{c}{$\Rightarrow$(b)} \\

\includegraphics[width=2.75cm]{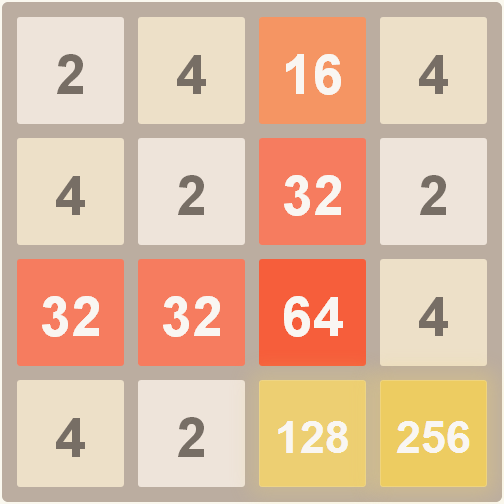} &  \includegraphics[width=2.75cm]{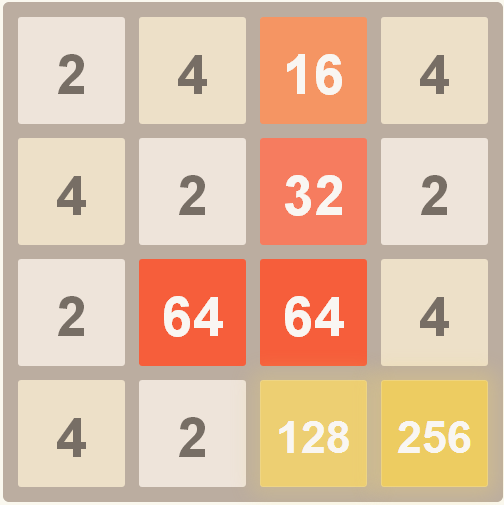} &  \includegraphics[width=2.75cm]{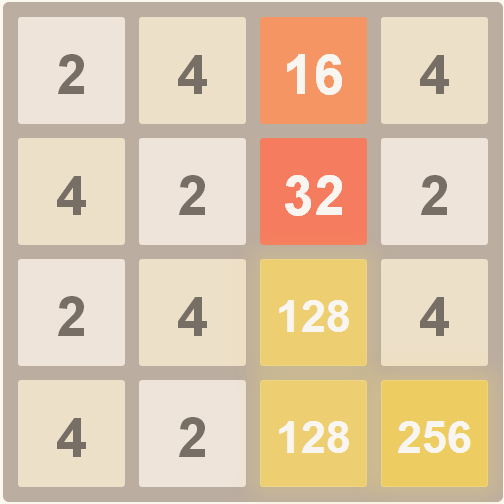} &  \includegraphics[width=2.75cm]{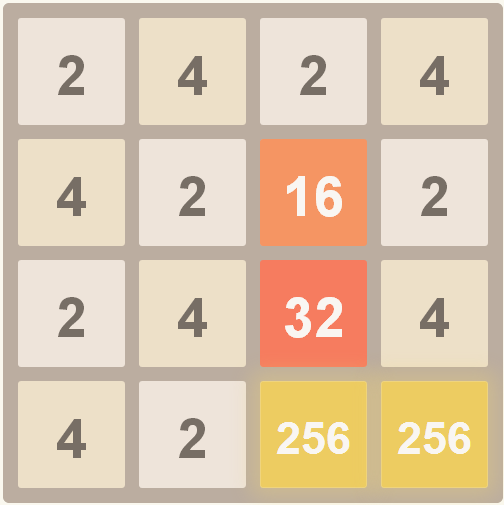}\\
$\Rightarrow$(a) & $\Rightarrow$(a) & $\Rightarrow$(a) & $\Downarrow$(a) \\
\multicolumn{4}{c}{\includegraphics[width=2.75cm]{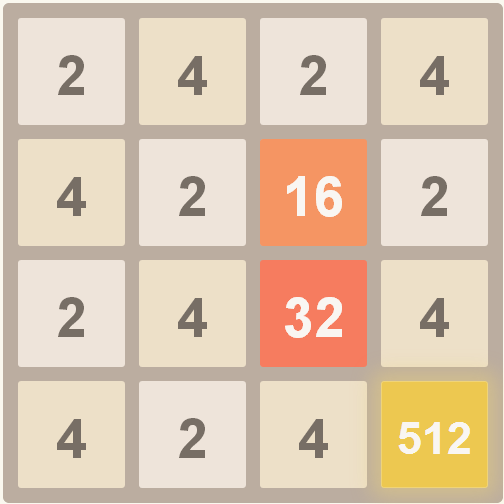}} \\
\multicolumn{4}{c}{$\Rightarrow$(a)}

\end{tabular}
\caption{Two possible sequences of moves to activate {\sf C}; (b) $\Downarrow, \Downarrow, \Downarrow, \Downarrow, \Rightarrow$, and 
(a) $\Rightarrow, \Rightarrow, \Rightarrow, \Downarrow, \Rightarrow$ (corresponding to {\sf A} and {\sf B} as the activating tile, respectively).}\label{fig:or_path}
\end{figure}

\end{document}